\newcommand{\eps}{\varepsilon}
\newcommand{\ceil}[1]{\left\lceil{#1}\right\rceil}
\newcommand{\abs}[1]{\left | #1 \right |}
\newcommand{\floor}[1]{\left \lfloor #1 \right \rfloor}
\newcommand{\D}{\mathcal{D}}
\renewcommand{\P}{\mathcal{P}}
\newcommand\drop[1]{}
\newcommand\E{{\textnormal{E}}}
\newcommand\U{{\mathcal U}}
\newcommand\C{{\mathcal C}}
\def\lcm{\operatorname{lcm}}
\renewcommand{\H}{\textnormal{H}}
\newcommand{\I}{\textnormal{I}}
\renewcommand{\subset}{\subseteq}
\newcommand{\Prp}[1]{\Pr\left [ #1 \right ]}
\newcommand{\Prpp}[2]{\Pr_{#1}\left [ #2 \right ]}
\newcommand{\Epp}[2]{\E_{#1}\left [ #2 \right ]}
\newcommand{\A}{\mathcal{A}}
\newtheorem{lemma}{Lemma}
\newtheorem{theorem}{Theorem}
\newtheorem{corollary}{Corollary}
\newtheorem{proposition}{Proposition}
\theoremstyle{definition}
\newtheorem{definition}{Definition}
\newcommand{\set}[1]{\left \{ #1 \right \}}
\newcommand*\samethanks[1][\value{footnote}]{\footnotemark[#1]}
\title{The Entropy of Backwards Analysis}
\author{Mathias Bæk Tejs Knudsen\thanks{Research partly supported by
Advanced Grant DFF-0602-02499B from the Danish Council for Independent Research
under the Sapere Aude research career programme.} \thanks{Research partly
supported by the FNU project AlgoDisc - Discrete Mathematics, Algorithms, and
Data Structures.}}
\author{Mikkel Thorup\samethanks[1]}
\affil{University of Copenhagen\\\texttt{mathias@tejs.dk, mikkel2thorup@gmail.com}}
\date{}
\begin{document}

\maketitle

\begin{abstract}
Backwards analysis, first popularized by Seidel,  is often the simplest most elegant way of analyzing
a randomized algorithm. It applies to incremental algorithms where
elements are added incrementally, following some random permutation,
e.g., incremental Delauney triangulation of a pointset, where points 
are added one by one, and where we always maintain
the Delauney triangulation of the points added thus far. For backwards
analysis, we think of the permutation as generated backwards, implying
that the $i$th point in the permutation is picked uniformly 
at random from the $i$ points not picked yet in the backwards direction.
Backwards analysis has also been applied elegantly by Chan to the 
randomized linear time minimum spanning tree algorithm of Karger, Klein,
and Tarjan. 

The question considered in this paper is how much randomness we
need in order to trust the expected bounds obtained using backwards analysis,
exactly and approximately. For the exact case, it turns out that a
random permutation works if and only if it is minwise, that is, for any
given subset, each element has the same chance of being first.
Minwise permutations are known to have $\Theta(n)$ entropy, and
this is then also what we need for exact backwards analysis.

However, when it comes to approximation, the two concepts diverge
dramatically. To get backwards analysis to hold within a factor
$\alpha$, the random permutation needs entropy $\Omega(n/\alpha)$. This
contrasts with minwise permutations, where it is known that a $1+\eps$
approximation only needs $\Theta(\log (n/\eps))$ entropy. Our negative
result for backwards analysis essentially shows that it is as abstract
as any analysis based on full randomness.
\end{abstract}

\section{Introduction}
Randomization is a powerful tool in the construction of algorithms and
data structures, yielding algorithms that are often much simpler and
more efficient than their deterministic counterparts. However, much of
the analysis relies on the assumption of full access to randomness,
e.g., a sequence of $n$ i.i.d. random variables, a hash function
mapping keys from a large universe into independent hash values, or as
in this paper, a uniformly random permutation $\pi\in S_n$. However,
often $n$ is large, and then it is unrealistic to get access to just
$n$ independent random bits. A lot of work has been devoted to
removing the unrealistic assumption of full randomness, replacing it
with weaker notions of pseudo-randomness where the above large objects
are not uniformly random, but rather generated based on only few fully
random bits. We also want the generation to be efficient, e.g., for a
pseudo-random hash function, we want to quickly compute the
hash value of any given key. While the pseudo-random objects are not
uniformly random, we do want them to satisfy probabilistic properties
that can be proved to suffice in many algorithmic contexts. A classic
example is Wegman and Carter's {\em $k$-independent hash functions\/}
\cite{wegman81kwise} with the property that any $k$ given keys get
independent random hash values. We can get such a hash function over
the prime field $\mathbb Z_p$ using a random degree $k-1$ polynomial,
instantiated with $k$ independent random coefficients from $\mathbb
Z_p$.  The important point here is that if we have an application of
hash functions where the analysis only assumes uses $k$-independence,
then we know that it can be implemented using any $k$-independent hash
function. Another example more relevant to this paper is minwise
permuations of Broder et al. \cite{broder98minwise}. A random permutation $\pi\in S_n$
is {\em (approximately) minwise\/} if for any given subset $X\subseteq
[n] =\{1,\ldots,n\}$, each $x\in X$ has (approximately) the same
chance of being first from $X$ in $\pi$. If a minwise permutation is
used to pick the pivots in Quicksort \cite{hoare62quicksort}, then we
know that Quicksort will have exactly the same expected number of
comparisons as if a fully random permutation was used---two keys $x$
and $y$ get compared if and only if one of them is picked before any
key between them. Exact minwise permutations are unrealistic to
implement in the sense that they have entropy at least $\Theta(n)$ \cite{broder98minwise}, which is only
slightly less than a fully random permutation that has entropy $\ln
n!\approx n\ln n$.  However, if we are satisfied that each $x\in X$ is
first with probability at most $(1+\eps)/|X|$, then the required
entropy is only $\Theta(\log(n/\eps))$ \cite{broder98minwise}, and then the expected number
of comparisons in Quicksort is only increased by a factor $1+\eps$.

Randomized algorithms have now been studied for more than 50 years
\cite{hoare62quicksort,rabin76random}. We have a vast experience for
how randomized algorithms can be understood and
analyzed. Probabilistic properties have emerged that have proven
useful in many different contexts, and we want to know if these
properties can be supported with realistic implementations. Generally,
we want the simplest most elegant analysis of a randomized algorithm,
but if it relies on unrealistic probabilistic properties, then we have
to also look for a possibly more complicated analysis relying on more
realistic properties, or worse, give up the algorithm because we do
not know how to implement the randomness.

\subsection{Backwards analysis}
In this paper we focus on, so-called, backwards analysis, which was
first popularized by Seidel \cite{seidel93backwards}. Using backwards
analysis he obtained an extremely simple and elegant analysis for many
randomized algorithms assuming full randomness. The technique is so
appealing that it is now taught in text books
\cite{marks-cgbook,motwani95book}.

Backwards analysis applies to incremental algorithms where elements from some
set $X=\{X_1,...,X_n\}$ are added one by one in a random order,
starting from the empty set.  Let $\pi$ be the corresponding
permutation over $[n]=\{1,\ldots,n\}$, that is, $X_{\pi(i)}$ is the
$i$th element added to the set. We note that backwards
analysis has also been applied to the case where elements are added
in batches \cite{chan1998backwards} rather than one by one. 
We shall return to this in Section~\ref{sec:MST}, and for now just
focus on permutations.

We also have a cost function that with any $Y\subseteq X$ and $x\in Y$
associates a cost $c(x,Y)$ of adding $x$ to $Y\setminus\{x\}$.  A
classic example from \cite{seidel93backwards} is incremental Delauney
triangulation of a point set where $c(x,Y)$ is the degree of $x$ in
the Delauney triangulation of $Y$. For a given permutation $\pi$, the
total cost is $\sum_{i=1}^n c(X_{\pi(i)},X_{\pi([i])})$. Here
$[i]=\{1,\ldots,i\}$, $\pi(I)=\{\pi(j)\mid j\in I\}$, and $X_J=\{X_j
\mid j\in J\}$, so $X_{\pi([i])}=\{X_{\pi(1)},\ldots,X_{\pi(i)}\}$.
Now, what is the expected total cost if the permutation $\pi$ is
fully-random?

In backwards analysis, we think of the permutation $\pi$ as generated
{\em backwards} starting from $\pi(n)$. Then {\em $\pi(i)$ is uniformly
distributed in $\pi([i])=\{\pi(1),\ldots,\pi(i)\}=\{1,\ldots,n\}\setminus\{\pi(i+1),\ldots,\pi(n)\}$}. The important point here is that
when $\pi(i)$ is generated, we only know the set $\pi([i])$ via the
values of $\pi(i+1),\ldots,\pi(n)$---we do not now the individual values of
$\pi(1),\ldots,\pi(i)$.
For Delauney triangulation this means that the $i$th point
$X_{\pi(i)}$ is uniformly distributed amongst the points in the given
set $X_{\pi([i])}$. The expected cost is
therefore the average degree of a point in the (assumed unique) Delauney triangulation of
$X_{\pi([i])}$. Since the Delauney triangulation is a planar graph,
by Euler's formula, the average degree is less than $6$. The expected
cost of adding all $n$ points is thus at most $6n$.

In a more general use of backwards analysis, suppose for
$i=1,\ldots,n$, we have determined $c_i$ such that for any given
subset $Y\subseteq X$ of size $i$, if $x$ is uniformly distributed in
$Y$, then $\E[c(x,Y)]\leq c_i$. Then, with a fully random permutation, the
expected total cost is bounded by $\sum_{i=1}^n c_i$.  For Delauney
triangulation, we had $c_i=6$ for all $i$. However, there
are also applications where the $c_i$ are different, e.g., in the analysis of
quicksort from \cite{seidel93backwards}, we get that the expected
number of comparisons with the $i$th pivot is bounded by $c_i\leq 2n/i$, 
hence that the expected total number of comparisons is bounded by $2n H_n$.

Our question in this paper is how much
randomness we need to create a distribution on permutations so that
the expected bounds derived from backwards analysis apply exactly or
approximately. Based on the many positive findings surveyed in \cite{thorup15tabulation}, we were originally hoping for a good solution using
$O(n^\eps)$ random bits for some small $\eps>0$.

We will now define more precisely the probabilistic properties needed
for backwards analysis. To simplify notation, we identify our
set $X$ of elements with $[n]=\{1,\ldots,n\}$, that is, $x_i=i$ for
$i=1,\ldots,n$. A {\em cost function over $[n]$} assigns a cost
$c(x,Y)$ to any subset $Y\subseteq [n]$ and $x\in Y$. The total cost
of a permutation $\pi\in S_n$ is then
\[c(\pi)=\sum_{i=1}^n c(\pi(i),\pi([i])).\]
For our most positive results, we would like 
our distribution to be backwards $\alpha$-uniform, defined as follows.
\begin{definition}\label{def:uniform}
A distribution $\D$ over $S_n$ is {\em backwards $\alpha$-uniform\/} if for any
set $Y\subset [n]$ with positive support, that is, 
$\Pr_{\pi\sim D}[\pi([|Y|])=Y]>0$, and any $x\in Y$, we have
that 
\[\Pr_{\pi\sim D}[\pi(|Y|)=x\mid \pi([|Y|])=Y]\leq \alpha/|Y|.\]
If $\D$ is backwards $1$-uniform, we will also say that it is {\em exact
backwards uniform}.
\end{definition}
Let $c$ be {\em any\/} cost function associating a cost $c(x,Y)$ with
any $Y\subset [n]$ and $x\in Y$.  For every $i\in [n]$, let $c_i$ be
such that for every size $i$ subset $Y\subset [n]$, we have $\E_{x
  \sim \U(Y)}[c(x,Y)]\leq c_i$. If $\pi$ is drawn from an
$\alpha$-uniform distribution $\D$, then it is easily seen that the
total expected cost is $\E_{\pi\sim D}[c(\pi)]\leq \alpha\sum_{i=1}^n
c_i$, which is $\alpha$ times bigger than the bound $\sum_{i=1}^n c_i$
we get assuming full randomness like in backwards analysis.  Demanding
$\alpha$-uniformity is, however, very conservative. All we really need
from the distribution $\D$ is that $\E_{\pi\sim D}[c(\pi)]\leq
\alpha\sum_{i=1}^n c_i$ for any given $c$ and $c_i$s satisfying the
conditions.  We are going to prove lower bounds for this condition
even for the restricted case where all $c_i$ are the same and
normalized to $1$. We refer to this as being {\em backwards
  $\alpha$-efficient\/} as defined below.
\begin{definition}\label{def:efficient}
A distribution $\D$ over $S_n$ is {\em backwards
$\alpha$-efficient\/} if the following condition is satisfied.
Let $c$ be {\em any\/} cost function that with any $Y\subset [n]$ and $x\in Y$
associates a cost $c(x,Y)$ such that for every subset
$Y\subset [n]$, we have $\E_{x \sim \U(Y)}[c(x,Y)]\leq 1$.
Then 
\begin{align}
	\notag
	\E_{\pi\sim D}[c(\pi)]\leq \alpha n.
\end{align}
If $\D$ is backwards $1$-efficient, we will also say that it is {\em exact
backwards efficient}.
\end{definition}
If the distribution $\D$ is backwards $\alpha$-uniform, it is clearly
also backwards $\alpha$-efficient. Note in the definition of
$\alpha$-efficient that the distribution $\D$ over $S_n$ is given
first, and that it has to work for any later cost function. The vision is
that we want a ``standard library'' implementation of the distribution
that gives the right expectation for any future cost function.
Unfortunately, our main result is negative: as described below, if $\D$ is
$\alpha$-efficient, then we need $\Omega(n/\alpha)$ random bits to
define a random permutation $\pi$ from $\D$. 

\subsection{Results}
We now list our results in more detail.  First, concerning
exact backwards analysis, we show that a
backwards 1-uniform distribution can be implemented as a uniform
distribution over a family of $\exp(O(n))$ permutations. We have
a matching lower bound showing that even if we allow non-uniform
distributions and even if we are satisfied with 1-efficiency, then the
family has to be of size $\exp(\Omega(n))$.

Our more interesting results are for $\alpha$-approximations for
$\alpha>1$.  On the positive side we present a backwards
$\alpha$-uniform distribution which is a uniform distribution over a
family of size $\exp(O(n(\log\alpha)^2/ \alpha))$. On the negative
side, we show that even if we allow non-uniform distributions and even
if we are satisfied with $\alpha$-efficiency, then the family has to
be of size $\exp(\Omega(n/\alpha))$. More precisely, we show that if a
random permutation is $\alpha$-efficient, then its entropy is
$\Omega(n/\alpha)$. 

The entropy lower bound on the family size needed for
backwards $\alpha$-efficient permutations is our main 
result. For a given distribution $\D$, we define a simple
adversarial cost function $c^\D$ as follows. Consider a set $Y\subseteq [n]$.
If $Y\subseteq [n]$ does not have positive support in $\D$, that is,
$\Pr_{\pi\sim D}[\pi([|Y|])=Y]=0$, then we just set $c(y,Y)=1/|Y|$
for all $y\in Y$. If $Y$ has positive support, we let $x\in Y$
maximize the chance that it is last in $Y$, that is, $x$ maximizes
$\Pr_{\pi\sim D}[\pi(|Y|)=x\mid \pi([|Y|])=Y]$.  We then set $c^\D(x,Y)=|Y|$
and $c^\D(y,Y)=0$ for $y\in Y\setminus\{x\}$. Then 
$\E_{x \sim \U(Y)}[c^D(x,Y)]\leq 1$ for all $Y$ as required.
If $\D$ is backwards $\alpha$-efficient, then the expected total
cost should be $\E_{\pi\sim D}[c^\D(\pi)]\leq \alpha n$. However,
here we prove that $\E_{\pi\sim D}[c^\D(\pi)]\leq \alpha n$ implies
that the entropy of $\D$ is $\Omega(n/\alpha)$.

\subsection{Comparison with minwise permutation}
Our results are contrasted with corresponding results for minwise
permutations \cite{broder98minwise}. An {\em $\eps$-minwise\/} distribution
$\D$ over permutations from $S_n$ is one such that 
for any given $Y\in [n]$ and any $x\in Y$, we have
\[\Pr_{\pi\sim D}[\pi(x)=\min \pi(Y)]=(1\pm\eps)/|Y|.\]
If $\eps=0$, we say $\D$ is exact minwise. In our condition of
backwards $\alpha$-uniform permutations, we only have to consider sets
$Y$ with positive support, and then we want
\[\Pr_{\pi\sim D}[\pi(x)=\max \pi(Y)\mid \pi([|Y|])=Y]\leq \alpha/|Y|.\]
Above, the difference between using minimum and maximum is inconsequential.
The important difference is the conditioning $\pi([|Y|])=Y$.

In this paper, we show that for a random permutation $\D$,
the following properties are equivalent:
\begin{itemize}
\item $\D$ is exact minwise.
\item $\D$ is exact backwards uniform.
\item $\D$ is exact backwards efficient.
\end{itemize}
From \cite{broder98minwise} we know that the family size needed
for an exact minwise distribution is $\exp(\Theta(n))$, and hence this
is the size we need for exact backwards analysis.

However, when it comes to approximation, it turns out that there is a
huge difference between minwise and backwards permutations.  
For $\eps$-minwise distribution, we have very positive results. In
\cite{broder98minwise} it is proved that for $\eps$-minwise permutations, it suffices
with a uniform distribution on a family of size $O(n^2/\eps^2)$. 
In fact, more constructively, Indyk \cite{indyk01minwise} has show that it 
suffices to use $O(\log (1/\eps))$-independence.
These positive results for $\eps$-minwise stand in sharp contrast to our lower bound of $\exp(\Omega(n/\alpha))$ on the family size needed for a distribution over $S_n$ to be backwards $\alpha$-efficient.

Viewed in terms of random bits, for constant approximations, the above
results say that we need only need a logarithmic bits for approximate
minwise permutations whereas we for backwards analysis need a linear
number of bits.

\subsection{Chan's backwards analysis of the Karger-Klein-Tarjan's MST}\label{sec:MST}
Chan \cite{chan1998backwards} has made an interesting application of
backwards analysis for the case where elements are not added one by
one as in, but rather in a batches. The point is that large batches
may be handled more efficiently than elements processed one element at
the time. The concrete case is the randomized minimum spanning tree
algorithm Karger, Klein, and Tarjan \cite{KKT95:mst} which runs in linear
expected time. We assume that all edge weights are unique.

The important step in the algorithm is that we have a graph $G=(V,E)$
from which we sample a random subset $S\subset E$ of the
edges. Recursively, we compute a minimum spanning forest $F$ from of
$(V,S)$.  Finally, we remove all edges $(v,w)$ from $G$ that are {\em
  $F$-heavy\/} in the sense $v$ and $w$ are connected in $F$ by a path
with no edge heavier than $(v,w)$. In particular, this includes all
edges from $S\setminus F$.  The question is, how many edges do we
expect to remain in $G$?

Let us say that we start with $n$ nodes and $m$ edges in $G$. As in
\cite{chan1998backwards}, we simplify the discussion a bit by saying
that $S$ is sampled to have a specific size $pm$, and we assume
that $S$ is sampled uniformly. We claim that the expected number of edges
remaining in $G$ after the removal of $F$-heavy edges is less than $n/p$.

The proof is using the idea from backwards analysis. Consider adding
uniformly random edge $e\in E\setminus S$ to $S$. Then $e$ is uniformly
distributed in $S'=S\cup\{e\}$ just like $\pi(i)$ was uniformly
distributed in $\pi([i])$.

We now make the combinatorial observation that $e$ is $F$-light if and only
if $e$ is in the minimum spanning forest $F'$ of $(V,S')$. There are
at most $n-1$ edges in $F$, and $e$ is uniform in $S'$, so we conclude that
$e$ is $F$-light with probability $(n-1)/(pm+1)<n/(pm)$. Thus we expect
$(m-pm)n/(pm)=n/p-n$ edges to remain from $E\setminus S$. Including
the at most $n-1$ edges from $F$. It follows that the total expected
number of edges remaining is less than $n/p$, as desired.

The question now is if we can support this kind of backwards analysis
based generating a set $S$ of some prescribed size $pm$, followed by
an element $e$. As in the previous backwards analysis, we have
a cost function $c(S',e)$ of $e$ being added last to $S'=S\cup\{e\}$. The
cost is $1$ if $e$ is light. We argued for any set $S'$ that 
$\E_{e \sim \U(S')}[c(e,S')]<n/(pm)$, and based on that, we which to
conclude that $\E[\sum_{e\in E\setminus S} c(e,S\cup\{e\})]\leq |E\setminus S|n/(pm)$.

Stepping back, normalizing, and allowing for an approximation factor $\alpha$,
 what we would like in general
for this kind of analysis is a distribution 
 $\D$ over subsets $S$ of $E$ of size $pm$
such that if $c$ be {\em any\/} cost function such that
for any set $E'$ of size $pm+1$, we have
$\E_{e \sim \U(S')}[c(e,S')]\leq 1$, then
we conclude that 
\[\E_{S\sim \D}[\sum_{e\in E\setminus S} c(e,S\cup\{e\})]\leq 
\alpha |E\setminus S|.\]
Unfortunately, we will show that if $p$ is a constant from $(0,1)$, then
this implies that $\D$ has entropy $\Omega(m/\alpha \log(1+\alpha))$
unless the approximation factor is very large, i.e. $\alpha = \Omega(m)$.

Chan's MST analysis is an example illustrating how backwards analysis
can be applied when elements are added in random batches, and where
random elements from one batch are only considered against elements in
from preceding batches.  In the MST, we have only two batches: (1) the
initial samples set $S$, and (2) the remaining edges $E\setminus S$.
This is the extreme opposite of the backwards permutations, where
elements were added randomly in batches of size 1. Our entropy lower
bounds for both of these extreme cases suggests a general hardness.

\subsection{Reducing the randomness}
Getting a linear number of random bits in order to trust the bounds
from backwards analysis within a constant factor is often
prohibitive. It is almost as bad as using $\Theta(n\log n)$ bits to
get a fully-random permutation. 

For a concrete algorithm like the incremental Delauney triangulation,
one can still hope for a different analysis based on weaker
assumptions on the distributions of the random permutation
$\pi$. Indeed, Mulmuley \cite{mulmuley96notbackwards} has shown that
the bounds mentioned by Seidel \cite{seidel93backwards} can be
achieved within a constant factor using only $O(\log n)$
bits. Mulmuley's idea is very simple.  For the Delauney triangulation
he points out that a triangle $uvw$ will appear if and only if there
is no point $z$ inside the triangle appearing before the last of $u$,
$v$, and $w$ in the permutation, and the time used is determined by
the total number of triangles appearing. Mulmuley then showed that
generating the permutation using an $11$-independent function guarantees
that the probability that any given triangle appears is at most a
constant fraction larger than it would be, had the permutation been
generated using full randomness. It follows that Delauney
triangulation has asymptotically the same complexity with $11$-independence as
with full randomness, and full randomness was analysized with
backwards analysis. However, Mulmuley's idea does not apply to the
randomized MST algorithm from Section \ref{sec:MST}. The point is that
the obstruction for $e=(v,w)$ being $F$-light requires a whole path of
light edges from $v$ to $w$ to be sampled for $S$. Indeed it is an open
problem if the sample $S$ from the randomized MST algorithm can be 
generated with $o(n)$ bits. Pettie et al. \cite{PR08mst} have shown
that the MST algorithm can be changed to one that needs only $O(\log n)$
bits, but it would have been much more attractive if an
efficient implementation followed directly from the backwards analysis.

If we know in advance that we are only going to study a fixed limited
family of cost functions $\C_n$, then we can always use the standard
trick of first generating a limited number of fully-random
permutations $\Pi_n\subseteq S_n$, that we are free to use in our
permutation generator. Then, to handle a concrete cost function $c\in
\C_n$, we just have to pick a random $\pi\in\Pi_n$ using $\lg_2
|\Pi_n|$ bits.  This model prevents an adversary from learning $\Pi_n$
and create the adversarial cost function $c^{\,\U(\Pi_n)}$ discussed
earlier. Our goal in this paper is to have a public permutation
generator that works for all possible future cost functions, and this
is when a linear number of random bits are needed for backwards
analysis to be trusted. Such tricks can be applied whenever we have an
analysis assuming full randomness, and indeed the basic message of
this paper is that backwards analysis is almost as abstract as any
analysis based on full randomness.

\section{Preliminaries}

We define $[n] = \set{1,2,\ldots,n}$ and let $S_n$ denote the set of
permutations of $[n]$. For a graph $G$ we let $V(G)$ and $E(G)$ denote the set
of nodes and edges of $G$, respectively. For a set $S$ we let $\P(S)$ denote
the power set of $S$, i.e. the set containing all subsets of $S$.

For a random variable $X$ and a distribution $\D$ we write $X \sim \D$ to
mean that $X$ is drawn from $\D$. Given $X$ drawn from $\D$ the self information
$\I(X)$ of $X$ is defined as $\I(X) = -\ln \left ( \Pr_{X' \sim D} (X=X') \right )$,
where $X$ and $X'$ are i.i.d., and the entropy of $X$ is $\H(X) = \E(\I(X))$.
When $X$ is drawn from $\D$ the entropy of $\D$ is defined by $\H(\D) = \H(X)$.
\section{Transition Graphs of Distributions}

Given a distribution $\D$ on $S_n$ we associate a \emph{transition graph} $G$ defined
in the following manner. The nodes of $G$ are $\P([n])$, i.e., the set of all subsets
of $[n]$. For every node $S \in V(G)$
the weight of $S$ is $w_G(S) = \Prpp{\pi \sim \D}{\pi\left([\abs{S}]\right)=S}$.
For each non-empty set $S \subset [n]$ such that $w_G(S) > 0$ and $s \in S$
there is and edge from $S$ to $S \setminus \set{s}$ with weight
$w_G(S, S \setminus \set{s})$ defined by
\begin{align}
	\notag
	w_G(S, S \setminus \set{s})
	=
	\Prp{
		\pi\left(\abs{S}\right) = s
		\mid 
		\pi\left(\left[\abs{S}\right]\right) = S
	}
	\, .
\end{align}

Each permutation $\pi$ from $\D$ defines, in reverse order,
a walk from $[n]$ to $\emptyset$ where each edge drops an element from the current
set, that is, the walk $(\pi([n]), \pi([n-1]), \ldots, \pi([1]), \emptyset)$.

When it is clear from the context which graph $G$ we are working with, we write
$w$ instead of $w_G$.

\begin{definition}
	\label{def:transitionGraph}
	A \emph{transition graph} $G$ is a graph for which there exists a distribution
	$\D$ such that $G$ is the transition graph for $\D$. If $G$ is the transition
	graph for $\D$, we write $\D \to G$.
\end{definition}

The reason we introduce the transition graph is that
it captures the backwards performance of the distribution. The
distribution $\D$ is $\alpha$-uniform if and only if, for each node
$S$ with positive weight, every outgoing edges $(S,S\setminus\set{s})$
has weight at most $\alpha/|S|$. Likewise, for any given cost
function $c$, the expected cost with $\D$ can be computed from $G$ as
$\sum_{S,S\setminus\set{s}}(c(s,S)w(S)w(S,S\setminus\set{s})$.
The $\alpha$-efficiency of $\D$ is thus also determined from $G$.

Let $G^U_n$ be the transition graph of the uniform distribution over $S_n$.
Because a transition graph captures backwards performance,
we get that any permutation distribution $\D$ with transition graph
$G^U_n$ must be both backwards $1$-uniform and $1$-efficient. In fact,
we will also argue the converse: if $\D$ is backwards $1$-uniform or $1$-efficient,
then $\D\to G^U_n$. We will argue that this is also equivalent to
$\D$ being minwise or maxwise, that is, we will prove Proposition \ref{prop:minwiseEqualsBackwards}.

\begin{proposition}
	\label{prop:minwiseEqualsBackwards}
	Let $\D$ be a distribution and $G^U_n$ the transition graph of $\U(S_n)$. Then
	the following five statements are equivalent:
	(i) $\D$ is backwards $1$-uniform.
	(ii) $\D$ is (restricted) backwards $1$-efficient.
	(iii) $\D$ is exact minwise.
	(iv) $\D$ is exact maxwise.
	(v) $\D \to G^U_n$.
\end{proposition}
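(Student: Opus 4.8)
The plan is to prove the five equivalences by establishing a cycle of implications that routes everything through the transition-graph characterization, since statement (v) is the most rigid and easiest to compare against. Concretely, I would show (v) $\Rightarrow$ (i) $\Rightarrow$ (ii) $\Rightarrow$ (v), and separately (v) $\Leftrightarrow$ (iii) and (v) $\Leftrightarrow$ (iv); the minwise and maxwise arguments are symmetric, so only one needs real work. The implication (v) $\Rightarrow$ (i) is immediate from the remark already made in the excerpt: in $G^U_n$ every node $S$ with positive weight has all outgoing edge weights exactly $1/|S|$ (by symmetry of the uniform distribution, each element of $\pi([|S|])$ is equally likely to be $\pi(|S|)$), so any $\D$ with $\D\to G^U_n$ inherits edge weights $\le \alpha/|S|$ with $\alpha=1$. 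The implication (i) $\Rightarrow$ (ii) is the trivial observation, also stated in the excerpt, that $\alpha$-uniform implies $\alpha$-efficient.

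The crux is (ii) $\Rightarrow$ (v): I must show that if $\D$ is exact backwards $1$-efficient then its transition graph is forced to coincide with $G^U_n$. The idea is to run the adversarial cost function $c^\D$ defined in the introduction. For each node $S$ of positive weight, $c^\D$ charges $|S|$ to the element $x$ maximizing the outgoing edge weight $w(S,S\setminus\{x\})$ and $0$ to everything else, so the $\U(Y)$-expectation is $\le 1$ on every set. Expanding $\E_{\pi\sim\D}[c^\D(\pi)] = \sum_{S} w(S)\cdot |S|\cdot \max_{s\in S} w(S,S\setminus\{s\})$, and comparing with the value $\sum_S w(S)\cdot |S|\cdot(1/|S|) = \sum_S w(S) = n$ obtained when all edge weights are uniform (note $\sum_{|S|=i} w(S)=1$ for each $i$), $1$-efficiency forces $\sum_S w(S)|S|\max_s w(S,S\setminus\{s\}) \le n$. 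Since $\max_s w(S,S\setminus\{s\}) \ge (1/|S|)\sum_s w(S,S\setminus\{s\}) = 1/|S|$ always (the outgoing edge weights are a probability distribution summing to $1$), every term is at least $w(S)$, so equality must hold termwise: for every positive-weight $S$, $\max_s w(S,S\setminus\{s\}) = 1/|S|$, which means \emph{all} outgoing edges have weight exactly $1/|S|$. That is precisely the edge-weight profile of $G^U_n$. It remains to check the node weights also match, i.e. $w(S) = 1/\binom{n}{|S|}$; this follows by backwards induction on $|S|$ from $|S|=n$ (where $w([n])=1$) down, using $w(S\setminus\{s\}) = \sum_{S'\ni s,\, |S'|=|S\setminus\{s\}|+1} w(S')w(S',S'\setminus\{s\})$ together with the just-established uniform edge weights — a routine counting argument — and also checking that the support is exactly all of $\P([n])$ rather than a sub-collection, which again follows because uniform edge weights out of $[n]$ reach every $(n-1)$-set, etc.

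For (v) $\Leftrightarrow$ (iii): if $\D\to G^U_n$, then for any $Y$ and $x\in Y$, the event $\pi(x)=\min\pi(Y)$ can be computed by conditioning on which $|Y|$-set equals $\pi([|Y|])$ restricted to the "$x$ is the last of $Y$ to be inserted, among those inserted by step $|Y|$" structure — more cleanly, $x$ is minwise-first in $Y$ iff in the backwards walk $x$ is the last element of $Y$ to be dropped, which happens with probability $\sum_{S\supseteq Y,\,|S|=?}\dots$; the point is that this probability depends only on the transition graph, and in $G^U_n$ a direct symmetry computation gives exactly $1/|Y|$. Conversely, to get (iii) $\Rightarrow$ (v), I would argue the contrapositive: if some edge weight in the transition graph of $\D$ differs from $1/|S|$, one can exhibit a set $Y$ on which the minwise probability is skewed away from $1/|Y|$ — most simply taking $Y=[n]$ reduces to $\Pr[\pi(x)=\min\pi([n])] = \Pr[x \text{ dropped last}] = w([n],[n]\setminus\{x\})$, and then inducting. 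The maxwise equivalence (v) $\Leftrightarrow$ (iv) is identical after replacing $\min$ by $\max$ and "dropped last" by "dropped first," or simply by composing with the reversal permutation $i\mapsto n+1-i$ on values.

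The main obstacle I anticipate is the bookkeeping in (ii) $\Rightarrow$ (v): the termwise-equality argument only directly controls the \emph{edge} weights on positive-weight nodes, and I need to separately and carefully propagate this to conclude that the set of positive-weight nodes is all of $\P([n])$ and that the node weights are the uniform ones. Handling the support correctly — ruling out a "uniform-looking" graph supported on a proper sub-poset — is where one has to be a little careful, though it should follow cleanly from the backwards induction since a positive weight at $[n]$ plus uniform outgoing edges forces positive weight at every $(n-1)$-set, and so on down.
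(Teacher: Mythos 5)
Your overall architecture is sound, and your treatment of the hardest implication, (ii) $\Rightarrow$ (v), is correct but genuinely different from the paper's. You run the global adversarial cost $c^{\D}$ and observe that each term $w(S)\,\abs{S}\max_{s\in S}w(S,S\setminus\set{s})$ is at least $w(S)$ (the outgoing edge weights of a positive-weight node sum to $1$), so backwards $1$-efficiency forces termwise equality, hence uniform outgoing edges at every positive-weight node; the downward induction from $[n]$ then pins the node weights and the support, giving $G=G^U_n$. The paper localizes instead: it takes a deviating edge at a node $S$ of maximal size, notes that all node weights at levels $\ge\abs{S}$ already agree with $G^U_n$, and builds a cost function supported essentially on $S$ whose expectation strictly exceeds $n$. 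Your averaging argument is arguably cleaner; the paper's localization has the side benefit that the same maximal-$S$ setup immediately refutes (iv) as well, since the maxwise probability $\sum_{T\supseteq S}w(T)w(T,T\setminus\set{s})$ deviates only in the $T=S$ term.

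That last point is where your sketch has a genuine problem. For (iii) $\Rightarrow$ (v) you write $\Pr[\pi(x)=\min\pi([n])]$ equals the probability that $x$ is dropped last, which equals $w([n],[n]\setminus\set{x})$; the second equality is false: $w([n],[n]\setminus\set{x})$ is the probability that $x$ is dropped \emph{first} in the backwards walk, i.e.\ that $x$ is the maxwise element of $[n]$, not the minwise one. So your $Y=[n]$ base case is really the maxwise base case, and the top-down induction you gesture at is the maxwise induction: maxwise probabilities expand as $\sum_{T\supseteq Y}w(T)w(T,T\setminus\set{x})$, involving only levels $\ge\abs{Y}$, so a downward induction pins edge weights level by level. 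The minwise expansion is $\sum_{T:\,T\cap Y=\set{x}}w(T)w(T,T\setminus\set{x})$, which involves all the \emph{small} levels, so ``identical after replacing min by max'' is not right: for minwise the induction must run bottom-up (it does work, pinning the singleton weights from $Y=[n]$ and then the products $w(S)w(S,S\setminus\set{x})$ for successively larger $S$), or, as the paper does, you prove maxwise $\Rightarrow$ (v) directly and transfer to minwise via the reversal symmetry (a distribution is minwise iff its reversal is maxwise, combined with the already-proved maxwise $\Rightarrow$ minwise). You do mention the reversal trick, but deploy it in the wrong direction (deducing maxwise from minwise). With the min/max labels straightened out and one of the two inductions actually carried through, your proof closes; as written, the key identity is wrong and the step ``and then inducting'' hides exactly the asymmetry the paper's proof is organized around.
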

\begin{proof}
Let $G$ be the distribution graph of $\D$.
We have already seen that whether $\D$ is backwards $1$-uniform and whether $\D$
is backwards $1$-efficient, are properties of $G$. Since
$\U(S_n)$ is backwards $1$-uniform and $1$-efficient this implies that (v) $\Rightarrow$
(i), (ii). Let $\pi \sim \D$. Then $\D$ is exact minwise if and only if $\Prp{\min \pi(S) = \pi(s)} = \frac{1}{\abs{S}}$
for every $s \in S \subset [n]$. We can write $\Prp{\min \pi(S) = \pi(s)}$ in
terms of the weights of $G$ in the following manner
\begin{align}
	\label{eq:minwiseByWeights}
	\Prp{\min \pi(S) = \pi(s)}
	=
	\sum_{T \subset [n], T \cap S = \set{s}}
		w_G(T) w_G(T,T\setminus\set{s})
	\, .
\end{align}
In the same manner we note that
\begin{align}
	\label{eq:maxwiseByWeights}
	\Prp{\max \pi(S) = \pi(s)}
	=
	\sum_{T \subset [n], S \subset T}
		w_G(T) w_G(T,T\setminus\set{s})
	\, .
\end{align}
Equations \eqref{eq:minwiseByWeights} and \eqref{eq:maxwiseByWeights} imply that (iii)
and (iv) follows from (v).

We have proved that if $G = G^U_n$ then all of (i)-(iv) hold. To prove that all the
statements are equivalent we will prove that if $G \neq G^U_n$ then none of (i)-(iv) hold.
So assume that $G$ is not the distribution graph for $\U(S_n)$ and let $S \subset [n]$ 
and $s \in S$ be such that $w_G(S, S \setminus \set{s}) \neq w_{G^U_n}(S, S \setminus \set{s}) = \frac{1}{\abs{S}}$.
Such a $S$ much exists since the edge weights determine the node weights. 
We furthermore choose $S$ such that $\abs{S}$ is maximal under this constraint.
We also note that we may choose $s$ such that $w_G(S, S \setminus \set{s}) > \frac{1}{\abs{S}}$.
We observe that $w_G(T) = w_{G^U_n}(T)$ for
every set $T$ with $\abs{T} \ge \abs{S}$. Here we use again, that the edge weights determine the
node weights: $w_G(T)$ can be calculated by the weights
of the edges on paths from $[n]$ to $T$, and these weights are equal in $G$ and $G^U_n$.

Since $w_G(S,S \setminus \set{s}) > \frac{1}{\abs{S}}$ it is clear that $G$ is not
backwards $1$-uniform, and therefore (i) does not hold. Let $c(t,T)$ be the cost function
that equals $1$ when $T \neq S$, and for $T=S$ we let $c(t,T) = \abs{S}$ if $t = s$ and
$c(t,T) = 0$ for $t \neq s$. Then we have that
\begin{align}
	\notag
	\Epp{\pi \sim \D}{c(\pi)}
	-
	\Epp{\pi \sim \U(S_n)}{c(\pi)}
	=
	w_G(S) w_G(S,S \setminus \set{s}) \abs{S}
	-
	w_{G^U_n}(S)
	> 
	0
	\, ,
\end{align}
and therefore $\D$ is not backwards $1$-efficient, and (ii) does not hold.
Now consider the calculation of $\Prpp{\pi \sim \D}{\max \pi(S) = \pi(s)}$
in \eqref{eq:maxwiseByWeights}. All terms of the sum are the same if
we replace $G$ by $G^U_n$, except for the term $w_G(S) w_G(S,S\setminus\set{s})$.
Since $w_G(S) = w_{G^U_n}(S) > 0$ and $w_G(S,S\setminus\set{s}) > w_{G^U_n}(S,S\setminus\set{s})$
we conclude that the the probability that $\max \pi(S) = \pi(s)$ is larger when 
$\pi \sim \D$ that when $\pi \sim \U(S_n)$ and hence $\D$ is non exact maxwise and
(iv) does not hold.

We have thus far proved that (i), (ii), (iv), (v) are all equivalent and
they imply (iii) via (v). We have proved maxwise $\Rightarrow$ minwise,
so by symmetry minwise $\Rightarrow$ maxwise as well, so (iii) and (iv) are equivalent.
So all of the five conditions are equivalent, which finishes the proof.
\end{proof}

The minimal size of a set $X \subset S_n$ such that $\U(X)$ is an exact
minwise distribution was studied in
\cite{broder98minwise}, and by 
Proposition \ref{prop:minwiseEqualsBackwards}, this is also the
bound for backwards 1-uniform and 1-effient. In \cite{broder98minwise},
the minimal size was proved to be between $\lcm(1,2,\ldots,n) = e^{n-o(n)}$ and $4^n$.
In Corollary \ref{cor:optimalMinWiseUniform} we prove that the minimal size is
precisely $\lcm(1,2,\ldots,n)$.

Now let $G$ be a transition graph. We associate a \emph{memoryless
distribution} of permutations $\D$ to $G$.
The permutations of $\D$
are constructed in the following manner.  Let $[n] = A_n,A_{n-1},
\ldots, A_0$ be a random walk of length $n$ in $G$ starting in the
node $[n]$. When the walk is at the node $A$ the next node on the walk
is chosen independently from the previous choices and randomly from
the outgoing neighbours of $A$, such that $B$ is chosen with
probability $w(A,B)$ if there is an edge from $A$ to $B$.  In this way
the walk $A_n,A_{n-1},\ldots,A_0$ is a path in $G$ such that
$\abs{A_i} = i$ for $i=0,1,2,\ldots,n$. Let $\pi \in S_n$ be the
permutation such that $\set{\pi(i)} = A_i \setminus A_{i-1}$ for each
$i \in [n]$.  
The distribution of $\pi$ generated in the above manner is called the memoryless distribution
of $G$.

\begin{definition}
	\label{def:memorylessDistribution}
	Given a transition graph $G$ the \emph{memoryless distribution} $\D$ of $G$
	is the distribution on $S_n$ described above. If $\D$ is the memoryless
	distribution of $G$ we write $G \to \D$.
\end{definition}

We remark that for a transition graph there may be multiple distributions
$\D,\D'$ such that $\D \to G$ and $\D' \to G$. However, for each distribution
$\D$ there is only one transition graph $G$ such that $\D \to G$. If $G \to \D$ then
$\D \to G$, but $\D \to G$ does not imply that $G \to \D$.

We also note that if $\D \to G \to \D'$ then $\D$ and $\D'$ might be different,
but $G \to \D \to G'$ implies that $G = G'$. 

We remark, without proof, that among the distributions $\D$ with
$\D \to G$, the memoryless distribution of $G$ is the one with the maximal
amount of entropy.

\section{Lower Bound}

The main theorem of this paper, Theorem \ref{thm:lowerbound}, shows that if a distribution
$\D$ is $\alpha$-efficient then the entropy of $\D$ is $\Omega\left(\frac{n}{\alpha}\right)$.
The main idea is as follows. Let $G$ be the transition graph of $\D$ and 
$\D'$ the memoryless distribution of $G$ such that $\D \to G \to \D'$. We then
consider the $(q+1)$-tuple $\ell(\pi) = (\pi([p]), \pi([p+1]), \ldots, \pi([p+q]))$,
where $\pi$ is drawn from the memoryless distribution $\D'$, and where $p,q$ are suitably chosen depending on
$\D$.
We then proceed to give a lower bound on $\H(\ell(\pi))$ in terms of $n$ and $\alpha$,
and give a upper bound on $\H(\ell(\pi))$ in terms of $\H(\D)$.
The key idea is that $\H(\D)$ can be used to give an upper bound on $\H(\ell(\pi))$
even though $\pi$ is drawn from $\D'$ and not from $\D$. This comes from the fact that
$\ell(\pi)$ can be described by $\pi([p])$ and $\pi([p+q])$ along with a permutation of
$\pi([p+q]) \setminus \pi([p])$. Each of the sets $\pi([p])$ and $\pi([p+q])$ have the
same distribution whether $\pi$ is chosen from $\D$ or $\D'$ since this
is determined by the weights of the corresponding nodes in $G$, and so the entropy of
each set is at most $\H(\D)$. Since there are at most $q!$ permutations of $q$ elements
it follows that $\H(\ell(\pi)) \le 2\H(\D) + \ln(q!)$.
In order to give a lower bound on the entropy of $\ell(\pi)$
we use that the distribution
of $\pi([p+i])$ conditioned on $(\pi([p+i+1]), \pi([p+i+2]), \ldots, \pi([p+q]))$
is the same as the distribution of $\pi([p+i])$ conditioned on $\pi([p+i+1])$,
because the distribution is memoryless.
Informally, this means that when we discover $\pi([p+i])$ after having discovered
$(\pi([p+i+1]), \pi([p+i+2]), \ldots, \pi([p+q]))$ there is still a large amount
of entropy in $\pi([p+i])$ - the exact amount is determined by the weights of the
outgoing edges of $\pi([p+i+1])$ in $G$. If $\pi$ had been drawn from $\D$ instead then
this would not necessarily be the case and the proof would break down.
The technical details are given below.

\begin{theorem}
	\label{thm:lowerbound}
	Let $\D$ be a distribution over $S_n$. If $\D$ is backwards
	$\alpha$-efficient, then $\H(\D) \ge \floor{\frac{n}{48\alpha}}$.
\end{theorem}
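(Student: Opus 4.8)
The plan is to follow the roadmap already sketched before the theorem statement: pass from $\D$ to its transition graph $G$ with $\D \to G$, let $\D'$ be the memoryless distribution of $G$ so that $\D \to G \to \D'$, and study the tuple $\ell(\pi) = (\pi([p]), \pi([p+1]), \ldots, \pi([p+q]))$ for $\pi \sim \D'$, with $p$ and $q$ to be chosen. The upper bound on $\H(\ell(\pi))$ is the easy half: since $\ell(\pi)$ is determined by the two endpoint sets $\pi([p])$ and $\pi([p+q])$ together with an ordering of the $q$ elements in $\pi([p+q]) \setminus \pi([p])$, subadditivity of entropy gives $\H(\ell(\pi)) \le \H(\pi([p])) + \H(\pi([p+q])) + \ln(q!)$, and each set's distribution is governed by node weights of $G$, which are the same under $\D$ and $\D'$; hence $\H(\pi([p])), \H(\pi([p+q])) \le \H(\D)$ and so $\H(\ell(\pi)) \le 2\H(\D) + \ln(q!)$.

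For the lower bound on $\H(\ell(\pi))$ I would use the chain rule in the backwards direction together with memorylessness: $\H(\ell(\pi)) = \H(\pi([p+q])) + \sum_{i=0}^{q-1} \H\big(\pi([p+i]) \mid \pi([p+i+1]), \ldots, \pi([p+q])\big)$, and by the memoryless property each conditional entropy equals $\H\big(\pi([p+i]) \mid \pi([p+i+1])\big)$, i.e. the expected entropy of a single backwards step out of a random set of size $p+i+1$. So it suffices to lower bound, for a typical set $S$ of the relevant size, the entropy of the outgoing edge distribution $\{w_G(S, S\setminus\{s\})\}_{s\in S}$, and then sum over $i$. This is where the $\alpha$-efficiency hypothesis enters, via the adversarial cost function $c^\D$ described in the introduction: for each $S$ with positive support, $c^\D$ charges $|S|$ to the element most likely to be last. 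Backwards $\alpha$-efficiency says $\E_{\pi\sim\D}[c^\D(\pi)] \le \alpha n$, which unpacks to $\sum_{S} w_G(S)\, |S|\, \max_{s\in S} w_G(S, S\setminus\{s\}) \le \alpha n$; averaging over sizes, for most sizes $k$ a weighted-typical set $S$ has $\max_s w_G(S,S\setminus\{s\}) = O(\alpha/k)$. An outgoing edge distribution on $\le k$ points whose maximum probability is $O(\alpha/k)$ has entropy $\Omega(\log(k/\alpha))$ (it must be spread over $\Omega(k/\alpha)$ values). Choosing $p \asymp n/2$ and $q \asymp n/(c\alpha)$ for a suitable constant $c$, the sets in play all have size $\Theta(n)$, so each of the $q$ backwards steps contributes $\Omega(\log(n/\alpha)) = \Omega(1 + \log(1/\alpha) + \log n)$; more simply, by choosing $q$ a small enough constant fraction of $n/\alpha$ one can ensure each step contributes $\Omega(\log q) $ is not needed — it is enough that each step contributes at least a constant, say $\ge \tfrac12$, giving $\H(\ell(\pi)) \ge \tfrac{q}{2} - \ln(q!)$ net against the upper bound. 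Comparing the two bounds, $\tfrac{q}{2} \lesssim 2\H(\D) + (\text{the } \ln q! \text{ terms cancel})$, which yields $\H(\D) = \Omega(q) = \Omega(n/\alpha)$; tracking the constants carefully gives the stated $\floor{n/(48\alpha)}$.

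The main obstacle is the lower bound on per-step entropy: one must argue that the ``most sets are good'' statement derived from $\alpha$-efficiency is quantitatively strong enough, simultaneously across all sizes $i \in \{p, \ldots, p+q-1\}$, after conditioning — i.e. that throwing away the atypical sets (those with a too-heavy outgoing edge, or with $w_G(S)$ too small) costs only a constant factor in the entropy sum, not everything. Concretely I expect to need a Markov/averaging argument: the expected value $\sum_S w_G(S)|S|\max_s w_G(S,S\setminus\{s\}) \le \alpha n$ means that for a random $S$ drawn with probability $w_G(S)$ and size in the target range, $|S| \max_s w_G(S,S\setminus\{s\}) \le 8\alpha$ except with small probability, and on the good event the Shannon entropy of the outgoing distribution is $\ge \ln(|S|/(8\alpha)) - 1$. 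One then has to be careful that the conditioning in the chain rule is on the \emph{later} sets $\pi([p+i+1]),\ldots,\pi([p+q])$, so the relevant distribution on $S = \pi([p+i+1])$ is exactly its marginal under $\D'$ (equivalently under $\D$), which keeps everything consistent. Getting the constant $48$ is then just bookkeeping over these slack factors (the factor $2$ from two endpoint sets, the size window, the Markov slack, the $-1$ additive loss), so I would not belabor it.
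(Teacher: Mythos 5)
Your skeleton is the same as the paper's (transition graph $G$, memoryless $\D'$ with $\D\to G\to\D'$, the tuple $\ell(\pi)$, the upper bound $\H(\ell(\pi))\le 2\H(\D)+\ln(q!)$, the adversarial cost function charging $|S|$ to the most likely last element), and the upper-bound half is correct as you state it. The gap is in the final accounting of the lower bound. You claim that ``it is enough that each step contributes at least a constant, say $\ge \tfrac12$'' and that ``the $\ln q!$ terms cancel.'' They do not: with only a constant per step, the chain rule gives $\H(\ell(\pi))\ge \Omega(q)$ and the comparison reads $\Omega(q)\le 2\H(\D)+\ln(q!)$, which is vacuous since $\ln(q!)\approx q\ln q \gg q$; there is no $\ln(q!)$ on the lower-bound side to cancel. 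What the argument actually requires is that the \emph{average} per-step conditional entropy exceed $\ln q$ by an additive constant. This is exactly why the paper takes $q=\floor{n/(24\alpha)}$, a window of sizes $\ge n/2$, and a window with $\sum_i t_i\le 4\alpha q$: the per-step bound is then $\ln\bigl(n/(8\alpha)\bigr)\ge \ln(3q)=\ln q+\ln 3$, and the surplus $q\ln 3$ is the entire content of the theorem after subtracting $\ln(q!)\le q\ln q$.

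This also undermines your proposed fix for what you correctly identify as the main obstacle. You hope that discarding atypical sets ``costs only a constant factor in the entropy sum,'' but a constant-factor (or constant-probability) loss is fatal here: the entropy sum and $\ln(q!)$ differ only by an additive $\Theta(q)$, so keeping, say, half the mass leaves roughly $\tfrac{q}{2}\ln\bigl(n/(8\alpha)\bigr)$, which is below $\ln(q!)$ once $n/\alpha$ is large, and the bound collapses. A Markov argument per level only gives the good event (e.g.\ $i\max_s w(S,S\setminus\set{s})\le 8\alpha$) with constant probability, which is not enough; you would need failure probability $o(1/\log(n/\alpha))$. The paper sidesteps this entirely by never discarding anything: it bounds $w(\pi([i]),\pi([i-1]))\le \max_s w(\pi([i]),\pi([i])\setminus\set{s})$ and applies Jensen's inequality, $-\Ep{\ln \max_s w}\ge -\ln \Ep{\max_s w}=\ln(i/t_i)\ge \ln\bigl(n/(2t_i)\bigr)$, and then a second Jensen over the window, $\sum_i\ln t_i\le q\ln(4\alpha)$, so the full expectation is preserved with no constant-factor loss. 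With your chain-rule formulation the same repair works verbatim (the conditional entropy at level $i$ is exactly $-\Ep{\ln w(\pi([i]),\pi([i-1]))}$), but the Jensen step, or an equally lossless substitute, is essential and must replace both the ``constant per step suffices'' claim and the constant-loss Markov plan.
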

We have not tried to optimize the constant in Theorem \ref{thm:lowerbound}.
\begin{proof}
Let $G$ be the transition graph of $\D$ and $\D'$ the memoryless distribution of $G$
so that we have $\D \to G \to \D'$.
Since $\D$ and $\D'$ have the same transition graph we conclude that $\D'$ is
backwards $\alpha$-efficient.

For each $S \subset [n]$ we let $v_S \in S$ be an element of $S$ that maximizes
$w(S,S\setminus\set{v_S})$. Let $c$ be the cost function defined by
\begin{align}
	\notag
	c(v, S) = 
	\left \{
		\begin{array}{cc}
			\abs{S} & v = v_S \\
			0 & v \neq v_S
		\end{array}
	\right .
	\, .
\end{align}
We see that for each $S$ we have $\Epp{v \sim \U(S)}{c(v,S)}
= 1$. Since $\D'$ is backwards $\alpha$-efficient we have
that $\Epp{\pi \sim \D'}{c(\pi)} \le \alpha n$.
For each
  $i \in [n]$ let $t_i = \Epp{\pi \sim \D'}{c(\pi(i), \pi([i]))
  }$, then we have $\sum_{i = 1}^n t_i \le \alpha n$.

Let $q = \floor{\frac{n}{24\alpha}}$. If $q = 0$ then $\floor{\frac{n}{48\alpha}} = 0$
and there is nothing to prove, so assume that $q > 0$.
There are $\floor{\frac{n}{2q}}$ ways to choose a positive integer $k$
such that $n-kq \ge \frac{n}{2}$, and hence there must exist $p = n-kq \ge \frac{n}{2}$
such that $\sum_{i=p+1}^{p+q} t_i \le 
\frac{1}{\floor{\frac{n}{2q}}} \sum_{i=1}^n t_i \le 
\frac{\alpha n}{\floor{\frac{n}{2q}}}
\le 4\alpha q$.
Fix such a $p$.

For a permutation $\pi$ we now let $\ell(\pi)$ be the $(q+1)$-tuple defined by
$\ell(\pi) = (\pi([p]), \pi([p+1]), \ldots, \pi([p+q]))$.
We will bound the entropy of $\ell(\pi)$, when $\pi$ is chosen from the memoryless $\D'$,
i.e. $\H_{\pi \sim \D'}(\ell(\pi))$.
The $(q+1)$-tuple
$\ell(\pi)$ can be deduced from the two sets $\pi([p]), \pi([p+q])$ and the sequence
$(\pi(p+1),\ldots,\pi(p+q))$. The combined entropy of the two sets is 
at most $\H_{\pi \sim D'}(\pi([p]))+\H_{\pi \sim D'}(\pi([p+q]))$. The sequence 
$(\pi(p+1),\ldots,\pi(p+q))$ is a permutation of $\pi([p+q]) \setminus \pi([p])$,
so given $\pi([p])$ and $\pi([p+q])$ the entropy of the permutation is no
more than $\ln(q!)$. Therefore, the total entropy of $\ell(\pi)$ is at most 
\begin{align}
	\notag
	\H_{\pi \sim D'}(\pi([p]))+
	\H_{\pi \sim D'}(\pi([p+q]))+
	\ln(q!)
	\, .
\end{align}
The entropy $\H_{\pi \sim D'}(\pi([p]))$ can be calculated from the transition
graph $G$ of $\D'$. Since $\D$ and $\D'$ have the same transition graph it follows
that $\H_{\pi \sim D'}(\pi([p])) = \H_{\pi \sim D}(\pi([p]))$, and the latter is
bounded by $\H(\D)$. Using the same argument for $\H_{\pi \sim D'}(\pi([p+q]))$ 
gives that
\begin{align}
	\notag
	\H_{\pi \sim \D'}(\ell(\pi))
	\le 
	2\H(\D)
	+\ln(q!)
	\, .
\end{align}

Our goal is now to give a lower bound on the entropy of $\ell(\pi)$ when 
$\pi$ is chosen from $\D'$. We note that the upper bound on the entropy 
of $\ell(\pi)$ holds for
any distribution $\D_0$ such that $\D_0 \to G$, but for the lower bound we
will use that $\D'$ is the memoryless distribution of $G$. Since $\D'$ is the
memoryless distribution of $G$ we have that for any $(q+1)$-tuple of sets
$(X_q,X_{q+1},\ldots,X_{q+p})$ such that $\abs{X_i} = i$ and $X_i \subset X_{i+1}$
we have that
\begin{align}
	\notag
	& \phantom{{}={}}
	\Prpp{\pi' \sim D'}{
		\ell(\pi') = (X_q,\ldots,X_{q+p})
	}
	\\
	\notag
	& = 
	\Prpp{\pi' \sim D'}{ \pi'([q]) = X_q }
	\prod_{i=p+1}^{p+q}
		\Prpp{\pi' \sim D'}{
			\pi'([i]) = X_{i}
			\mid
			\pi'([i-1]) = X_{i-1}
		}
	\\
	\notag
	& =
	\Prpp{\pi' \sim D'}{\pi'([q]) = X_q}
	\prod_{i=p+1}^{p+q}
		w(X_i, X_{i-1})
	\\
	\notag
	& \le
	\prod_{i=p+1}^{p+q}
		w(X_i, X_{i-1})
	\, .
\end{align}
Inserting this into the definition of the entropy of $\ell(\pi)$ gives us that
\begin{align}
	\notag
	\H_{\pi \sim \D'}(\ell(\pi))
	\ge
	- \sum_{i=p+1}^{p+q} 
		\Epp{\pi \sim \D'}{
			\ln \left ( w(\pi([i]), \pi([i-1])) \right )
		}
	\, .
\end{align}
By the definition of $v_{\pi([i])}$ we have that $w(\pi([i]), \pi([i-1]))$ is
bounded by $w(\pi([i]), \pi([i]) \setminus v_{\pi([i])})$. Applying this
together with Jensen's inequality gives us that
\begin{align}
	\notag
	\H_{\pi \sim \D'}(\ell(\pi))
	& \ge
	- \sum_{i=p+1}^{p+q} 
		\Epp{\pi \sim \D'}{
			\ln \left ( w(\pi([i]), \pi([i]) \setminus v_{\pi([i])}) \right )
		}
	\\
	\notag
	& \ge
	- \sum_{i=p+1}^{p+q} 
		\ln \left ( 
			\Epp{\pi \sim \D'}{
				w(\pi([i]), \pi([i]) \setminus v_{\pi([i])})
			}
		\right )
	\, .
\end{align}
By the definition of $c$ we have that
\begin{align}
	\notag
	t_i = 
	\Epp{\pi \sim \D'}{ c(\pi(i), \pi([i])) }
	& =
	i \cdot
	\Epp{\pi \sim \D'}{
		w(\pi([i]), \pi([i]) \setminus v_{\pi([i])})
	}
	\\
	\notag
	& \ge 
	\frac{n}{2} \cdot
	\Epp{\pi \sim \D'}{
		w(\pi([i]), \pi([i]) \setminus v_{\pi([i])})
	}
	\, ,
\end{align}
and therefore the entropy of $\ell(\pi)$ is at least
\begin{align}
	\notag
	- \sum_{i=p+1}^{p+q} 
		\ln \left ( 
			\frac{t_i}{n/2}
		\right )
	=
	q \ln \left ( \frac{n}{2} \right )
	- \sum_{i=p+1}^{p+q} 
		\ln (t_i)
	\, .
\end{align}
By Jensen's inequality we get that 
\begin{align}
	\notag
	\sum_{i=p+1}^{p+q} 
		\ln (t_i)
	\le 
	q \ln \left (
		\frac{\sum_{i=p+1}^{p+q} t_i}{q}
	\right )
	\le 
	q \ln (4\alpha)
	\, .
\end{align}
Hence we have that $\H_{\pi \sim \D'}(\ell(\pi)) \ge q \ln \left ( \frac{n}{8\alpha} \right ) $.
Comparing this with the upper bound on the entropy gives us that
\begin{align}
	\notag
	2\H(\D) + \ln(q!) \ge
	q \ln \left ( \frac{n}{8\alpha} \right )
	\, .
\end{align}
Using that $\ln(q!) \le q\ln(q)$ and isolating $\H(\D)$ gives us that
\begin{align}
	\notag
	\H(\D) \ge 
	\frac{q}{2} \cdot
	\ln \left (
		\frac{n}{8\alpha q}
	\right )
	\ge 
	\frac{1}{2} \cdot
	\floor{\frac{n}{24\alpha}} \cdot
	\ln \left (
		3
	\right )
	\ge 
	\floor{\frac{n}{48\alpha}}
	\, ,
\end{align}
which was what we wanted.
\end{proof}

\section{Single Batch Lower Bound}

For a set $X$ we let $\binom{X}{k} \subset \P(X)$ be the set containing all subsets of $X$
size $k$. In this section we prove following result:
\begin{theorem}
	\label{thm:singleBatch}
	Let $k$ be an integer and $\D$ a distribution over $\binom{[n]}{k}$. Let $\alpha \ge 1$.
	Assume that for every cost-function $c$ such that $\Epp{s \sim \U(S')}{c(s,S')} \le 1$
	for any subset $S' \subset [n]$ of size $k$, it holds that:
	\begin{align}
		\label{eq:singleBatch}
		\Epp{S \sim \D, s \sim \U([n] \setminus S)}{c(s, S \cup \set{s})}
		\le 
		\alpha
		\, ,
	\end{align}
	If $\min\set{k,n-k} = \Omega(n)$
	then either
	$\alpha = \Omega(n)$ or $\H(\D) = \Omega \left ( \frac{n}{\alpha} \log \left ( 1 + \alpha \right ) \right )$.
\end{theorem}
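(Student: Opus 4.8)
The plan is to follow the template of Theorem~\ref{thm:lowerbound}: turn $\D$ into an adversarial cost function, read a combinatorial inequality off the hypothesis, and convert that inequality into an entropy lower bound. The difference is that a single batch (size $k\to k+1$) has no chain of transitions to exploit, so in place of the memoryless/windowing argument I would use a shadow estimate of Kruskal--Katona type. Throughout write $p_S=\Prpp{S'\sim\D}{S'=S}$ and $p_{\max}=\max_S p_S$.

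\emph{Step 1 (adversary).} For each $T\in\binom{[n]}{k+1}$ let $v_T\in T$ maximise $p_{T\setminus\set{v_T}}$ and set $c(v_T,T)=k+1$, $c(t,T)=0$ for $t\ne v_T$ (and $c(t,T)=\tfrac1{k+1}$ when $T$ carries no $\D$-mass). Then $\Epp{s\sim\U(T')}{c(s,T')}\le 1$ for every $(k+1)$-set $T'$, so \eqref{eq:singleBatch} applies; writing $m_T=\max_{t\in T}p_{T\setminus\set{t}}$ it unfolds to
\[
	\sum_{T\in\binom{[n]}{k+1}} m_T \;\le\; \frac{\alpha(n-k)}{k+1}\;=:\;M .
\]
Since $\min\set{k,n-k}=\Omega(n)$ we have $M=\Theta(\alpha)$, and I may assume $\alpha\le c_0 n$ for a small constant $c_0$ (otherwise the first alternative of the theorem holds).

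\emph{Step 2 (Kruskal--Katona).} For $\A\subset\binom{[n]}{k}$ let $\partial^{+}\A\subset\binom{[n]}{k+1}$ denote the upper shadow. Complementing sets, $\partial^{+}\A$ is in bijection with the lower shadow of $\set{[n]\setminus S:S\in\A}$, a family of $(n-k)$-sets, so Kruskal--Katona in Lov\'asz's real-parameter form gives $\abs{\partial^{+}\A}\ge\sigma(\abs{\A})$, where $\sigma\bigl(\binom{x}{n-k}\bigr)=\binom{x}{n-k-1}$. Writing $N=\binom{x}{n-k}$ and $d:=x-(n-k)+1$ one has $\sigma(N)/N=\tfrac{n-k}{d}$ while $\ln N\asymp d\ln\tfrac nd$; this is where the logarithmic factor comes from, since forcing the shadow ratio down to $\Theta(\alpha)$ forces $d\asymp n/\alpha$ and hence $\ln N\asymp\tfrac n\alpha\ln(1+\alpha)$.

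\emph{Step 3 (entropy).} The heaviest set $S^{*}$ gives the warm-up: the $n-k$ sets $T=S^{*}\cup\set t$ all have $m_T\ge p_{\max}$, so $M\ge(n-k)\,p_{\max}$, i.e. $p_{\max}\le\alpha/(k+1)$, whence already $\H(\D)\ge\ln(1/p_{\max})=\Omega(\ln(n/\alpha))$. For the full bound I iterate this over all thresholds: for $\theta>0$ put $A_\theta=\set{S:p_S>\theta}$, so every $T\in\partial^{+}A_\theta$ has $m_T>\theta$ and therefore
\[
	M\;\ge\;\sum_{T} m_T\;=\;\int_0^{p_{\max}}\abs{\partial^{+}A_\theta}\,d\theta\;\ge\;\int_0^{p_{\max}}\sigma\bigl(\abs{A_\theta}\bigr)\,d\theta,
\]
which controls how fast $\abs{A_\theta}$ may grow as $\theta$ decreases. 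Feeding this into the identity $\H(\D)=\ln(1/p_{\max})+\int_0^{p_{\max}}\Prpp{S\sim\D}{p_S\le\theta}\,\tfrac{d\theta}{\theta}$, together with the dichotomy that a heavy atom (large $p_{\max}$) already forces $\alpha=\Omega(n)$, yields $\H(\D)=\Omega\bigl(\tfrac n\alpha\ln(1+\alpha)\bigr)$ whenever $\alpha=O(n)$. A clean way to package this is to restrict to dyadic levels $B_j=\set{S:p_S>2^{-j}}$: the displayed inequality gives $\sigma(\abs{B_j})\le M\,2^{j}$ for every $j$ with $2^{-j}<p_{\max}$, and one sums the resulting bounds $1-\sum_{S\in B_j}p_S\ge 1-\abs{B_j}\,p_{\max}$ against $\H(\D)\ge(\ln 2)\sum_{j\ge 1}\bigl(1-\sum_{S\in B_j}p_S\bigr)$.

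The hard part is Step~3: turning the whole family of shadow inequalities into an entropy bound without losing the $\log(1+\alpha)$ factor. Kruskal--Katona is tight only for colex-initial families, which after complementation is exactly the extremal distribution realising the theorem, so the argument must be arranged so that a non-uniform $\D$ cannot beat such a uniform one; since $\sigma$ is concave a naive Jensen step goes the wrong way, and it is the ``heavy atom $\Rightarrow\alpha=\Omega(n)$'' observation that handles the concentrated regime. I expect the threshold bookkeeping — choosing the level(s), bounding $\abs{A_\theta}$ from the shadow inequalities, and matching against the integral formula for $\H(\D)$ — to be essentially all of the work; the probabilistic content is the one-line reduction to $\sum_T m_T\le M$ in Step~1.
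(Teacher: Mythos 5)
Your Steps 1 and 2 are exactly the paper's reduction, just phrased on the sets rather than their complements: the adversarial cost $c(v_T,T)=k+1$ and the resulting inequality $\sum_T m_T\le\alpha(n-k)/(k+1)$ is the paper's bound $S(\Delta w)\le\alpha(n-k)/(k+1)$ for the weight function $w(A)=\Prpp{S\sim\D}{S=[n]\setminus A}$, and your layer-cake inequality $M\ge\int_0^{p_{\max}}\sigma(\abs{A_\theta})\,d\theta$ is the continuous analogue of the paper's Lemma~\ref{lem:basicWeight} combined with Lemma~\ref{lem:lovaszWeight} (Lov\'asz--Kruskal--Katona applied level by level). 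Up to that display you are on the paper's track, and that display is indeed the right tool.

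The gap is the conversion to entropy, which you yourself flag as ``essentially all of the work'' and which the paper carries out in Lemma~\ref{lem:singleBatchTechnical}; the concrete packaging you propose cannot deliver the stated bound. Extracting $\sigma(\abs{B_j})\le 2M2^{j}$ separately for each dyadic level throws away the budget-sharing that your integral provides (you re-use the full budget $M$ at every $j$), and the estimate $1-\Prpp{S\sim\D}{S\in B_j}\ge 1-\abs{B_j}p_{\max}$ is far too lossy: near the critical scale the shadow constraint permits $\abs{B_j}$ of order $2^{j}$ while you only know $p_{\max}\le\alpha/(k+1)$. Quantitatively, for every $j\ge 2\lg(n/\alpha)+O(1)$ the constraints $\sigma(\abs{B_j})\le 2M2^{j}$ and $p_{\max}\le\alpha/(k+1)$ are simultaneously consistent with $\abs{B_j}p_{\max}\ge 1/2$, so summing your per-level bounds certifies only $\H(\D)=\Omega(\log(n/\alpha))$, and a case split on very small $p_{\max}$ (min-entropy) does not restore the $\log(1+\alpha)$ factor. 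The missing idea is the paper's bookkeeping: replace ``cardinality times $p_{\max}$'' by the truncated masses $S(\pi_i w)$, use $\sum_{i\ge 0}S(\pi_i\Delta w)\le S(\Delta w)=O(\alpha)$ once so that all levels share a single budget, choose the level $j$ at which $\sum_{i\ge j}S(\pi_i w)\ge\tfrac14$ and $\sum_{i\le j}S(\pi_i w)\ge\tfrac14$, deduce $\ell_{n-k}(2^{j})\ge (n-k)/(4\alpha)$ from the levels below $j$ and $H(w)=\Omega(j)$ from the levels above, and finish with $2^{j}\ge\binom{(n-k)+\beta}{\beta}$, which is exactly where the $\log(1+\alpha)$ enters. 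A side remark: your claim that the complement of a colex-initial family is ``exactly the extremal distribution realising the theorem'' is not right --- the uniform distribution on $\binom{W}{k}$ with $\abs{W}=k+d$ already pays cost about $k$ against your own adversary, because of the many $(k+1)$-sets with a single element outside $W$ --- but nothing in the lower-bound argument depends on that aside.
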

Let $X$ be a set. We say that a function $w : \P(X) \to [0,1]$ is \emph{weight-function} for $X$. In order to
prove Theorem \ref{thm:singleBatch} we will first prove some properties of weight-functions.
We define $S(w)$ and $H(w)$ by:
\begin{align*}
	S(w) = \sum_{A \subset X} w(A)
	\, ,
	H(w) = \sum_{A \subset X} w(A) \ln \left ( \frac{1}{w(A)} \right )
	\, .
\end{align*}
In the definition of $H(w)$ we only sum over $A \subset X$ such that $w(A) \neq 0$.
We note that if $S(w) = 1$ then $w$ corresponds to a distribution $\D$ on the subsets
of $X$. In this case $H(w)$ is the entropy of $\D$.

For non-negative integers $i$ we define $\pi_i(w) = \pi_i w : \P(X) \to [0,1]$ by:
\begin{align*}
	(\pi_i w)(A) = \left [ w(A) \ge 2^{-i} \right ] 2^{-i-1}
	\, ,
\end{align*}
and we let $\Delta(w) = \Delta w : \P(X) \to [0,1]$ be defined by
$(\Delta w)(X) = 0$ and for $B \neq X$:
\begin{align*}
	(\Delta w)(B) = 
	\max_{a \notin B} \set{ w(B \cup \set{a}) }
	\, .
\end{align*}
We note that $\Delta \pi_i w = \pi_i \Delta w$. We first prove a lemma about two basic properties
of $\sum_{i \ge 0} S(\pi_i w)$ and $\sum_{i \ge 0} H(\pi_i w)$.
\begin{lemma}
	\label{lem:basicWeight}
	Let $w : \P(X) \to [0,1]$ be a weight-function, then it holds that:
	\begin{align}
		\label{eq:basicSumS}
		\sum_{i \ge 0} S(\pi_i w)
		& \in 
		\left [ \frac{1}{2} S(w), S(w) \right ]
		\\
		\label{eq:basicSumH}
		\sum_{i \ge 0} H(\pi_i w) & \le 
		3S(w) + 2H(w)
		\, .
	\end{align}
\end{lemma}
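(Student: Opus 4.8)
The plan is to handle the two inequalities \eqref{eq:basicSumS} and \eqref{eq:basicSumH} by a pointwise argument: since all of $S$, $H$, $\pi_i$ are defined by summing over $A\subset X$, it suffices to prove, for each fixed value $v=w(A)\in(0,1]$, the corresponding scalar inequalities relating $\sum_{i\ge0}(\pi_i v)$ and $\sum_{i\ge0}(\pi_i v)\ln\frac1{\pi_i v}$ to $v$ and $v\ln\frac1v$, where I abuse notation and write $(\pi_i v)=[v\ge 2^{-i}]\,2^{-i-1}$. Summing the scalar inequalities over all $A$ then gives the lemma. So first I would fix $v\in(0,1]$ and let $j=j(v)$ be the unique nonnegative integer with $2^{-j-1}\le v<2^{-j}$ (equivalently, $j=\lfloor\log_2(1/v)\rfloor$). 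Then $[v\ge 2^{-i}]=1$ exactly for $i=0,1,\ldots,j$, so $\sum_{i\ge0}(\pi_i v)=\sum_{i=0}^{j}2^{-i-1}=1-2^{-j-1}$.

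For \eqref{eq:basicSumS}: from $2^{-j-1}\le v<2^{-j}$ I get $\sum_{i\ge0}(\pi_i v)=1-2^{-j-1}$, and I must check $1-2^{-j-1}\in[\tfrac12 v,\,v]$... wait, that cannot be right dimensionally, so in fact the correct reading is that $\sum_{i\ge 0} S(\pi_i w)$ is being compared to $S(w)$ as a whole and the per-element bound should be $\sum_{i\ge 0}(\pi_i v)\in[\tfrac12 v, v]$. Indeed $\sum_{i=0}^{j}2^{-i-1}=1-2^{-j-1}$ and since $v<2^{-j}$ we have $1-2^{-j-1}=1-\tfrac12\cdot 2^{-j}>1-\tfrac12\cdot 2v\cdot\frac{1}{\,}$— here I would just note $2^{-j-1}\le v\le 2\cdot 2^{-j-1}=2^{-j}$, so writing $s=2^{-j-1}$ we have $s\le v<2s$ and $\sum_i(\pi_i v)=$ (geometric tail) $\ge s\ge \tfrac12 v$ and $\le 2s\le$ hmm. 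Cleanest: $\sum_{i\ge 0}(\pi_i v)$ telescopes to exactly $2^{-j-1}\cdot\frac{1-2^{-(j+1)}}{1-1/2}\cdot$ — I will just evaluate $\sum_{i=0}^j 2^{-i-1}=1-2^{-j-1}$ directly is wrong; rather $\sum_{i=0}^{j}2^{-i-1}=\tfrac12+\tfrac14+\cdots+2^{-j-1}$. The largest term is $2^{-1}$, not $2^{-j-1}$, because $i$ ranges from $0$ up. So I should double-check the indicator direction: $[v\ge 2^{-i}]=1$ iff $2^{-i}\le v$ iff $i\ge\log_2(1/v)$, i.e. for $i\ge j$ (large $i$), not small $i$. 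So the nonzero terms are $i=j,j+1,j+2,\ldots$, giving $\sum_{i\ge j}2^{-i-1}=2^{-j}$. Then since $2^{-j-1}\le v<2^{-j}$, we get $\sum_{i\ge0}(\pi_i v)=2^{-j}\in(v,2v]\subset[\tfrac12 v, \text{?}]$; the upper end needs $2^{-j}\le v$, which fails — so actually $\sum_i(\pi_i v)\le 2v$ and $\ge v\ge\tfrac12 v$, but the claimed upper bound is $S(w)$, i.e. per-element $v$. I would resolve this apparent gap by recalling $(\pi_i w)(A)=[w(A)\ge2^{-i}]2^{-i-1}$ has the extra factor $\tfrac12$: so the tail is $\sum_{i\ge j}2^{-i-1}=2^{-j-1}\le v$, and also $2^{-j-1}\ge\tfrac12\cdot 2^{-j}>\tfrac12 v$ since $v<2^{-j}$. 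That gives exactly $\sum_i(\pi_i v)\in(\tfrac12 v,\,v]$, hence \eqref{eq:basicSumS} after summing over $A$.

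For \eqref{eq:basicSumH}: with the same $j$, $\sum_{i\ge0}H(\pi_i w)$ contributes per element $\sum_{i\ge j}2^{-i-1}\ln(2^{i+1})=\ln 2\sum_{i\ge j}(i+1)2^{-i-1}$. The hard part — and the only real computation — is bounding this tail sum. I would use the standard identity $\sum_{i\ge j}(i+1)2^{-i-1}=(j+2)2^{-j}$ (checked by differentiating a geometric series or by induction on $j$), so the per-element contribution is $\ln 2\cdot(j+2)2^{-j}\le \ln 2\cdot(j+2)\cdot 2v$ using $2^{-j}\le 2v$... but I want to match $3v+2v\ln\frac1v$. Since $j=\lfloor\log_2(1/v)\rfloor\le\log_2(1/v)=\frac{1}{\ln 2}\ln\frac1v$, we have $\ln 2\cdot j\cdot 2^{-j}\le 2^{-j}\ln\frac1v\le 2v\ln\frac1v$, and $\ln 2\cdot 2\cdot 2^{-j}\le 2\ln2\cdot 2v<3v$; combining gives per-element bound $\le 2v\ln\frac1v\cdot(\text{const})+\ const\cdot v$, and I would track constants carefully to land inside $3v+2v\ln\frac1v$ (the slack $3$ versus $2\ln2\cdot 2$ and the exact coefficient on $v\ln\frac1v$ is where I must be careful, perhaps splitting $2^{-j}\in[v,2v)$ more tightly or using $2^{-j-1}$ again). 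Summing over all $A\subset X$ converts $\sum v$ into $S(w)$ and $\sum v\ln\frac1v$ into $H(w)$, yielding \eqref{eq:basicSumH}. The main obstacle is purely this constant-chasing in the $H$ bound — getting $(j+2)2^{-j}$ to fit under $3S(w)+2H(w)$ — which is elementary but needs the exact geometric-series identity and a clean comparison of $2^{-j}$ with $v$ at both ends.
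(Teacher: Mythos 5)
Your plan is essentially the paper's proof: argue per element $A$ (per value $v=w(A)$), evaluate the dyadic tail $\sum_{i\ge 0}(\pi_i w)(A)$ exactly to get \eqref{eq:basicSumS}, use the weighted tail identity $\sum_{i\ge j}(i+1)2^{-i-1}=(j+2)2^{-j}$ together with $2^{-j}\le 2v$ and $j\le\log_2(1/v)$ to get \eqref{eq:basicSumH}, and then sum over $A$; the constants do close the way you hoped, via $4\ln 2<3$ for the $S$-term and coefficient exactly $2$ on the $v\ln\frac1v$-term, just as in the paper. The one thing you must repair is the index bookkeeping. With your convention $2^{-j-1}\le v<2^{-j}$, the indicator $[v\ge 2^{-i}]$ equals $1$ precisely for $i\ge j+1$, not for $i\ge j$; hence $\sum_{i\ge0}(\pi_i v)=\sum_{i\ge j+1}2^{-i-1}=2^{-j-1}\in\left(\tfrac12 v,\,v\right]$. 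Your ``resolution'' of the apparent failure of the upper bound, the line $\sum_{i\ge j}2^{-i-1}=2^{-j-1}$, is false as an equation (that sum is $2^{-j}$); the correct fix is the shift of the starting index just described. (The paper avoids the issue by letting $j$ be the smallest integer with $w(A)\ge 2^{-j}$, so that the nonzero range really is $i\ge j$.) In the entropy part you again start the tail at $i=j$, which merely overestimates the true contribution by the $i=j$ term and is therefore harmless for an upper bound, and your chain $\ln 2\,(j+2)2^{-j}\le 2v\ln\frac1v+4\ln 2\, v<2v\ln\frac1v+3v$ is valid; finally, handle the trivial boundary case $v=1$ (where no nonnegative $j$ satisfies your bracket) separately, since there the sums are $1$ and $2\ln 2$, both within the claimed bounds.
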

\begin{proof}
We first show \eqref{eq:basicSumS}.
Let $A$ be a subset of $X$ with $w(A) \neq 0$, and let $j$ be the smallest integer such that $w(A) \ge 2^{-j}$.
Then we have that:
\begin{align*}
	\sum_{i \ge 0} (\pi_i)(A) =
	\sum_{i \ge j} 2^{-j-1} =
	2^{-j}
	\in \left [ \frac{1}{2} w(A), w(A) \right ]
	\, ,
\end{align*}
and the conclusion follows.

We now show \eqref{eq:basicSumH}.
Let $A$ be a subset of $X$ with $w(A) \neq 0$ and let $j$ be the smallest integer such that
$w(A) \ge 2^{-j}$.
Then:
\begin{align*}
	\sum_{i \ge 0} (\pi_i w)(A) \ln \frac{1}{(\pi_i w)(A)} =
	\ln(2) \sum_{i \ge j} 2^{-i} \cdot i =
	\ln(2) \cdot 2^{-j+1} \cdot (j+1)
	\, .
\end{align*}
We clearly have that $2^{-j+1} \ge w(A)$ and therefore $j-1 \le \lg \frac{1}{w(A)}$, i.e.
$j+1 \le 2 + \lg \frac{1}{w(A)}$. Inserting this bound along with $2^{-j} \le w(A)$ gives that:
\begin{align*}
	\sum_{i \ge 0} (\pi_i w)(A) \ln \frac{1}{(\pi_i w)(A)}
	& \le
	2 w(A) \cdot \left (
		2\ln(2) + \ln \frac{1}{w(A)}
	\right )
	\\
	& =
	4\ln(2) \cdot w(A) + 2w(A) \ln \frac{1}{w(A)}
	\\
	& <
	3 w(A) + 2w(A) \ln \frac{1}{w(A)}
	\, .
\end{align*}
Using the definition of $H$ and $S$ gives
\begin{align*}
	\sum_{i \ge 0} H(\pi_i w) = 
	\sum_{i \ge 0, A \subset X} (\pi_i w)(A) \ln \frac{1}{(\pi_i w)(A)} \le
	\sum_{A \subset X} 3w(A) + 2w(A) \lg \frac{1}{w(A)}
	=
	3S(w) + 2H(w)
	\, ,
\end{align*}
as desired.
\end{proof}

For a real number $x$ and an non-negative integer $k$ we define 
$\binom{x}{k} = \frac{x(x-1)\cdots(x-k+1)}{k!}$.
We will use Lov\'{a}sz's version of the Kruskal-Katona Theorem, see e.g. \cite{frankl84kruskalkatona}.
\begin{theorem}[Lov\'{a}sz, Kruskal-Katona, \cite{frankl84kruskalkatona}]
	\label{thm:lovasz}
	Let $\A \subset \binom{X}{k}$, and $\abs{\A} = \binom{x}{k}$ for a real number $x \ge k$.
	Then,
	\begin{align*}
		\abs{\Delta \A} \ge \binom{x}{k-1} \, .
	\end{align*}
\end{theorem}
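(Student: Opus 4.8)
\emph{Remark on the plan.} The statement is the classical Kruskal--Katona shadow theorem in Lov\'{a}sz's real-parameter formulation. First note that for a $k$-uniform family $\A$, the operator $\Delta$ introduced earlier --- applied to the indicator function of $\A$ and restricted to sets of size $k-1$ --- is exactly the usual lower shadow $\partial\A = \set{B : \abs{B}=k-1,\ B \subset A \text{ for some } A \in \A}$, and $\abs{\Delta\A}=\abs{\partial\A}$. So it suffices to prove $\abs{\partial\A}\ge\binom{x}{k-1}$. The plan is the standard route via compression (shifting), in three stages: (1) replace $\A$ by a \emph{shifted} family of the same size whose shadow is no larger; (2) for shifted families, split at the largest ground-set element and set up an induction on the ground set; (3) close the induction using a purely numerical inequality about binomial coefficients with real arguments (Lov\'{a}sz's lemma), for which the colexicographic initial segments furnish the extremal case.

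For stage (1), for $1\le i<j\le n$ use the $(i,j)$-compression $D_{ij}$, which replaces each $A\in\A$ with $j\in A$, $i\notin A$ by $(A\setminus\set{j})\cup\set{i}$ whenever the latter is not already present. One checks the two standard facts $\abs{D_{ij}(\A)}=\abs{\A}$ and, via the pointwise inclusion $\partial(D_{ij}\A)\subset D_{ij}(\partial\A)$, that $\abs{\partial(D_{ij}\A)}\le\abs{\partial\A}$. Iterating the $D_{ij}$ strictly decreases the monovariant $\sum_{A\in\A}\sum_{a\in A}a\ge 0$ whenever it changes $\A$, so after finitely many steps we reach a shifted $\A^{\ast}$ with $\abs{\A^{\ast}}=\abs{\A}$ and $\abs{\partial\A^{\ast}}\le\abs{\partial\A}$. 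For stage (2), induct on the size of the ground set: write $\A^{\ast}$ as the link $\mathcal L=\set{A\setminus\set{n}:n\in A\in\A^{\ast}}\subset\binom{[n-1]}{k-1}$ together with the deletion $\mathcal R=\set{A\in\A^{\ast}:n\notin A}\subset\binom{[n-1]}{k}$, both shifted. Shiftedness forces $\mathcal L\subset\partial\mathcal R$ (swap $n$ in a member of $\mathcal L\cup\set{n}$ for any smaller absent index to land in $\mathcal R$), and a short case split then gives the clean identity $\abs{\partial\A^{\ast}}=\abs{\partial\mathcal R}+\abs{\partial\mathcal L}$ while $\abs{\A^{\ast}}=\abs{\mathcal R}+\abs{\mathcal L}$; shiftedness also forces $\mathcal R$ to contain the ``upward shadow'' of $\mathcal L$ inside $[n-1]$, which is the extra structural constraint linking the two pieces.

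It then remains to combine the inductive bounds $\abs{\partial\mathcal R}\ge\binom{z}{k-1}$ and $\abs{\partial\mathcal L}\ge\binom{y}{k-2}$, where $\binom{z}{k}=\abs{\mathcal R}$ and $\binom{y}{k-1}=\abs{\mathcal L}$ and $\binom{z}{k}+\binom{y}{k-1}=\binom{x}{k}$, into the desired $\binom{z}{k-1}+\binom{y}{k-2}\ge\binom{x}{k-1}$. Equivalently --- and this is the cleanest packaging --- one shows that the colexicographic initial segment $\mathcal C$ of $\binom{[n]}{k}$ with $\abs{\mathcal C}=\binom{x}{k}$ has $\abs{\partial\mathcal C}$ equal to the ``cascade shadow'' $\binom{a_k}{k-1}+\binom{a_{k-1}}{k-2}+\cdots$ obtained from the $k$-cascade representation $\binom{a_k}{k}+\binom{a_{k-1}}{k-1}+\cdots$ of $\binom{x}{k}$, that shifted families have shadow at least $\abs{\partial\mathcal C}$, and that $\abs{\partial\mathcal C}\ge\binom{x}{k-1}$. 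This last inequality is Lov\'{a}sz's lemma, proved by a short induction exploiting $a_k\le x<a_k+1$ and the convexity/super-additivity of the maps $t\mapsto\binom{t}{j}$. The main technical obstacle is precisely this real-parameter bookkeeping: turning the combinatorial cascade bound into the clean form $\binom{x}{k-1}$, and carefully handling the base cases (namely $k=1$, $\abs{\A}=1$ forcing $x=k$, and the degenerate ranges of $x$ near $k$); the combinatorial core of stages (1)--(2) is routine shifting.
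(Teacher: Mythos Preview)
The paper does not prove this theorem: it is quoted as Lov\'{a}sz's version of the Kruskal--Katona theorem with a citation to \cite{frankl84kruskalkatona} and used as a black box. So there is no ``paper's own proof'' to compare your attempt against.

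Your outline is the standard compression proof and is essentially correct. A couple of small remarks. First, your identification of $\Delta\A$ with the lower shadow is right given the paper's definition $(\Delta w)(B)=\max_{a\notin B}w(B\cup\{a\})$: applied to the indicator of a $k$-uniform $\A$, the support of $\Delta w_\A$ is exactly $\partial\A$. Second, the decomposition $|\partial\A^\ast|=|\partial\mathcal R|+|\partial\mathcal L|$ does require $\mathcal L\subset\partial\mathcal R$, which you note; it is worth being explicit that the two pieces are the $(k-1)$-sets avoiding $n$ (namely $\partial\mathcal R$, which absorbs $\mathcal L$) and those containing $n$ (in bijection with $\partial\mathcal L$). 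Third, the numerical step is indeed where the real-parameter version earns its keep. The clean constraint you actually need is $y\le z$, which follows from $\mathcal L\subset\partial\mathcal R$ together with the inductive bound $|\partial\mathcal R|\ge\binom{z}{k-1}$; with that, the inequality $\binom{z}{k-1}+\binom{y}{k-2}\ge\binom{x}{k-1}$ under $\binom{z}{k}+\binom{y}{k-1}=\binom{x}{k}$ goes through (the equality case is $y=z=x-1$, and monotonicity of $t\mapsto\binom{t}{j}$ handles the rest). Your alternative packaging via colex initial segments and cascades also works, but then you are really proving the full Kruskal--Katona theorem and appending Lov\'{a}sz's numerical lemma afterwards, which is a slightly longer route than proving the Lov\'{a}sz form directly by the two-case induction.
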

In order to make it easier to apply Theorem \ref{thm:lovasz} we define $\ell_k(x)$ in the following
way.
\begin{definition}
	\label{def:ellkx}
	For an integer $k$ and a non-negative real number $x$ let $\ell_k(x)$ be the unique non-negative
	real number such that $x = \binom{k-1+\ell_k(x)}{k}$.
\end{definition}
Using Definition \ref{def:ellkx} we can reformulate Theorem \ref{thm:lovasz} in the following manner:
\begin{corollary}
	\label{cor:lovaszNew}
	Let $\A \subset \binom{X}{k}$.
	Then,
	\begin{align*}
		\abs{\Delta \A} \ge \abs{\A} \cdot \frac{k}{\ell_k(\abs{\A})}  \, .
	\end{align*}
\end{corollary}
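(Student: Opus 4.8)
The plan is to reduce \Cref{cor:lovaszNew} directly to \Cref{thm:lovasz} by a single change of parameter, and then to verify one elementary identity relating consecutive binomial coefficients. There is essentially no combinatorics left to do; the work is all in \Cref{thm:lovasz}, and the corollary is just a repackaging of it in the $\ell_k$-notation of \Cref{def:ellkx} that will be convenient in the proof of \Cref{thm:singleBatch}.

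First I would dispose of the degenerate case $\A = \emptyset$: then $\Delta\A = \emptyset$, so $\abs{\Delta\A} = 0$ and the claimed inequality reads $0 \ge 0$ (reading the right-hand side as $0$). So assume $\A \neq \emptyset$, i.e. $\abs{\A} \ge 1$. Set $x = k-1+\ell_k(\abs{\A})$, so that by \Cref{def:ellkx} we have $\abs{\A} = \binom{x}{k}$. To apply \Cref{thm:lovasz} I need $x \ge k$, equivalently $\ell_k(\abs{\A}) \ge 1$. This holds because the function $t \mapsto \binom{k-1+t}{k}$ is nonnegative and nondecreasing for $t \ge 0$ (each of its factors is nonnegative and nondecreasing in $t$) and takes the value $\binom{k}{k} = 1$ at $t=1$; hence $\abs{\A} \ge 1$ forces $\ell_k(\abs{\A}) \ge 1$. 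With $x \ge k$ established, \Cref{thm:lovasz} gives $\abs{\Delta\A} \ge \binom{x}{k-1}$.

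It remains to rewrite $\binom{x}{k-1}$ in the desired form. Since $x \ge k$, all of the factors $x, x-1, \ldots, x-k+1$ are positive, so both $\binom{x}{k}$ and $\binom{x}{k-1}$ are positive, and comparing the product formulas gives $\binom{x}{k} = \binom{x}{k-1}\cdot\frac{x-k+1}{k}$. Therefore $\binom{x}{k-1} = \binom{x}{k}\cdot\frac{k}{x-k+1} = \abs{\A}\cdot\frac{k}{\ell_k(\abs{\A})}$, where I used $\abs{\A} = \binom{x}{k}$ and $x-k+1 = \ell_k(\abs{\A})$. Combining this with the bound $\abs{\Delta\A} \ge \binom{x}{k-1}$ from the previous step yields $\abs{\Delta\A} \ge \abs{\A}\cdot\frac{k}{\ell_k(\abs{\A})}$, which is exactly the assertion.

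The only thing requiring any care — and it is bookkeeping rather than a genuine obstacle — is ensuring $x \ge k$ so that \Cref{thm:lovasz} is applicable and the binomial coefficients are positive, together with handling the boundary case $\A = \emptyset$. Once those are in place the corollary follows immediately from the displayed binomial identity, so I expect the whole argument to occupy only a few lines.
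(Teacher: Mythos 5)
Your proposal is correct and follows the paper's proof essentially verbatim: set $x = k-1+\ell_k(\abs{\A})$ so that $\abs{\A}=\binom{x}{k}$, apply Theorem~\ref{thm:lovasz}, and use the identity $\binom{x}{k-1}=\binom{x}{k}\cdot\frac{k}{x-k+1}$. The only difference is that you additionally check $x\ge k$ and the empty-family case, which the paper leaves implicit.
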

\begin{proof}
Let $x = \ell_k(\abs{\A})+k-1$. By definition we have that $\abs{\A} = \binom{x}{k}$ and so
we have that:
\begin{align*}
	\abs{\Delta \A} \ge \binom{x}{k-1}
	=
	\binom{x}{k} \cdot \frac{k}{x-k+1} =
	\abs{\A} \cdot \frac{k}{\ell_k(\abs{A})}
	\, ,
\end{align*}
as desired.
\end{proof}
Given $\A \subset \binom{X}{k}$ we can define a weight-function by $w_{\A}(A) = [A \in \A]$.
Corollary \ref{cor:lovaszNew} is now equivalent to
$S(\Delta w_{\A}) \ge S(w_{\A}) \cdot \frac{k}{\ell_k(S(w_{\A}))}$. However, it is not clear
what the correct generalisation of Corollary \ref{cor:lovaszNew} is to weight-functions with
non-uniform weights. Lemma \ref{lem:lovaszWeight} gives a bound $S(\Delta \pi_i w)$ in terms
of $S(\pi_i w)$.
\begin{lemma}
	\label{lem:lovaszWeight}
	Let $w : \P(X) \to [0,1]$ be a weight-function for $X$ with support on $\binom{X}{k}$.
	If $S(w) \le 1$ and $w(A) = 0$, then for every non-negative integer $i$:
	\begin{align*}
		S(\Delta \pi_i w)
		\ge 
		S(\pi_i w)
		\cdot
		\frac{k}{\ell_r(2^i)}
		\, .
	\end{align*}
\end{lemma}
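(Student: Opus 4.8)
The plan is to reduce Lemma~\ref{lem:lovaszWeight} to the set-family form of Kruskal--Katona already recorded in Corollary~\ref{cor:lovaszNew}, exploiting the fact that $\pi_i w$ is just a rescaled indicator of a single level set of $w$.

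First I would make that level set explicit. Put $\mathcal{A}_i = \set{A \subset X : w(A) \ge 2^{-i}}$; since $w$ is supported on $\binom{X}{k}$, so is $\mathcal{A}_i$. Directly from the definition of $\pi_i$, the value $(\pi_i w)(A)$ equals $2^{-i-1}$ when $A \in \mathcal{A}_i$ and $0$ otherwise, so in the indicator notation $w_{\mathcal{A}}(A) = [A \in \mathcal{A}]$ we have $\pi_i w = 2^{-i-1} w_{\mathcal{A}_i}$, and hence $S(\pi_i w) = 2^{-i-1}\abs{\mathcal{A}_i}$. Because $\pi_i w$ takes only the two values $0$ and $2^{-i-1}$, the same is true of $\Delta \pi_i w$: for $B \neq X$, the quantity $(\Delta \pi_i w)(B) = \max_{a \notin B}(\pi_i w)(B \cup \set{a})$ equals $2^{-i-1}$ exactly when some $B \cup \set{a}$ lies in $\mathcal{A}_i$, i.e. when $B$ lies in the shadow $\Delta \mathcal{A}_i$, and it is $0$ otherwise. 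Therefore $S(\Delta \pi_i w) = 2^{-i-1}\abs{\Delta \mathcal{A}_i}$.

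Next I would feed $\mathcal{A}_i \subset \binom{X}{k}$ into Corollary~\ref{cor:lovaszNew}, obtaining $\abs{\Delta \mathcal{A}_i} \ge \abs{\mathcal{A}_i}\cdot k/\ell_k(\abs{\mathcal{A}_i})$ (if $\mathcal{A}_i = \emptyset$ then $S(\pi_i w)=0$ and the claimed inequality is trivial, so assume $\mathcal{A}_i \neq \emptyset$). Multiplying by $2^{-i-1}$ and using the two identities above yields $S(\Delta \pi_i w) \ge S(\pi_i w)\cdot k/\ell_k(\abs{\mathcal{A}_i})$. Finally I would replace $\abs{\mathcal{A}_i}$ by $2^i$: every $A \in \mathcal{A}_i$ has $w(A) \ge 2^{-i}$, so $\abs{\mathcal{A}_i}\,2^{-i} \le \sum_{A \in \mathcal{A}_i} w(A) \le S(w) \le 1$, giving $\abs{\mathcal{A}_i} \le 2^i$. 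Since $\ell \mapsto \binom{k-1+\ell}{k}$ is nondecreasing on $\ell \ge 0$, the function $\ell_k$ is nondecreasing, so $\ell_k(\abs{\mathcal{A}_i}) \le \ell_k(2^i)$, and substituting this bound completes the proof.

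There is no serious obstacle here; the only points needing a little care are the degenerate case $\mathcal{A}_i = \emptyset$ and the monotonicity of $\ell_k$ that legitimizes passing from $\abs{\mathcal{A}_i} \le 2^i$ to $\ell_k(\abs{\mathcal{A}_i}) \le \ell_k(2^i)$. Everything else is bookkeeping around the identification $\pi_i w = 2^{-i-1} w_{\mathcal{A}_i}$ and one application of Corollary~\ref{cor:lovaszNew}.
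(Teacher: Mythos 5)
Your proposal is correct and follows essentially the same route as the paper's proof: identify the support $\mathcal{A}_i$ of $\pi_i w$, observe that $\Delta\pi_i w$ is $2^{-i-1}$ times the indicator of the shadow $\Delta\mathcal{A}_i$, apply the Lov\'asz--Kruskal--Katona bound, and pass from $\ell_k(\abs{\mathcal{A}_i})$ to $\ell_k(2^i)$ via $\abs{\mathcal{A}_i}\le 2^i$. If anything, you are slightly more careful than the paper, since you make explicit the empty-support case and the monotonicity of $\ell_k$ that the paper leaves implicit.
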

\begin{proof}
Let $\A$ denote the support of $\pi_i w$. Clearly $\abs{\A} \le 2^i S(w) \le 2^i$ by the definition of $\pi_i$.
We note that $(\Delta \pi_i w)(B) = 2^{-i-1} \cdot [B \in \Delta\A]$.
By Theorem \ref{thm:lovasz} we have that $\abs{\Delta \A} \ge \abs{\A} \cdot \frac{k}{\ell_k(\abs{\A})}$.
We clearly have that
\begin{align*}
	S(\pi_i w) = \abs{\A} \cdot 2^{-i-1},
	\,
	S(\Delta \pi_i w) = \abs{\Delta \A} \cdot 2^{-i-1}
	\, .
\end{align*}
Inserting this gives
\begin{align*}
	S(\Delta \pi_i w)
	\ge 
	S(\pi_i w)
	\cdot
	\frac{k}{\ell_k(2^i)}
	\, ,
\end{align*}
as desired.
\end{proof}

\begin{lemma}
	\label{lem:singleBatchTechnical}
	Let $w : \P(X) \to [0,1]$ be a weight-function with $S(w) = 1$ and let $k$ be a
	non-negative integer the support of $w$ is contained in $\binom{X}{k}$.
	Assume that $S(\Delta w) \le \alpha$. Then either $\alpha = \Omega \left ( k \right )$ or
	$H(w) = \Omega \left ( \frac{k}{\alpha} \log \left(1+\alpha\right) \right )$.
\end{lemma}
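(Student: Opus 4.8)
The plan is to combine the two structural tools we have just developed: the ``slicing'' decomposition of Lemma~\ref{lem:basicWeight} (which lets us replace an arbitrary weight-function $w$ by the family of level sets $\pi_i w$, paying only a constant factor in $S$ and a constant factor in $H$) and the Kruskal--Katona expansion of Lemma~\ref{lem:lovaszWeight} (which says each slice $\pi_i w$ has shadow $\Delta \pi_i w$ larger by a factor $k/\ell_k(2^i)$). The hypothesis $S(\Delta w) \le \alpha$ will be brought to bear through the elementary inequality $S(\Delta w) \ge \sum_{i\ge 0} S(\Delta \pi_i w)$ up to a constant factor --- note that $\Delta$ commutes with $\pi_i$ by the remark in the text, and $\sum_i \pi_i(\Delta w)$ underestimates $\Delta w$ by at most a factor $2$ as in \eqref{eq:basicSumS}. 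Putting these together, $\alpha \gtrsim \sum_{i\ge 0} S(\pi_i w)\cdot \frac{k}{\ell_k(2^i)}$, while $1 = S(w) \le 2\sum_{i\ge 0} S(\pi_i w)$ and $H(w) \ge \frac13\bigl(\sum_{i\ge 0} H(\pi_i w) - 3\bigr) \gtrsim \sum_{i\ge 0} H(\pi_i w)$.

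Next I would extract from these sums the quantitative statement we want. Write $\sigma_i = S(\pi_i w)$, so $\sum_i \sigma_i \in [\tfrac12,1]$ and $\sum_i \sigma_i\, \frac{k}{\ell_k(2^i)} \lesssim \alpha$. The point is that $\ell_k(2^i)$ is a decreasing-in-$i$ quantity: for small $i$ the set $\binom{X}{k}$ has $\le 2^i$ members only if those members live on few ``coordinates'', i.e. $\ell_k(2^i)$ is small (roughly $\ell_k(2^i)\approx k \cdot 2^{i/k}$ when $2^i$ is not too large, interpolating to $\ell_k(2^i)\to n$), so the expansion factor $k/\ell_k(2^i)$ is large for small $i$. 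Since the weighted average of $k/\ell_k(2^i)$ (with weights $\sigma_i/\sum\sigma_j$) is $\lesssim \alpha$, a constant fraction of the $\sigma$-mass must sit at levels $i$ with $k/\ell_k(2^i) \lesssim \alpha$, equivalently $\ell_k(2^i) \gtrsim k/\alpha$, equivalently $2^i \gtrsim \binom{k-1+\Theta(k/\alpha)}{k}$, which after taking logs gives $i \gtrsim \frac{k}{\alpha}\log(1+\alpha)$ (using $\log\binom{k+m}{k}\approx k\log(1+m/k) + m\log(1+k/m)$ with $m = \Theta(k/\alpha)$; when $\alpha$ is a constant this is $\Theta(k)$, and it degrades gracefully as $\alpha$ grows, vanishing only when $\alpha=\Omega(k)$, which is exactly the first alternative in the statement).

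From here the entropy bound is almost immediate: on each slice $\pi_i w$ every nonzero value equals $2^{-i-1}$, so $H(\pi_i w) = \sigma_i \ln(2^{i+1}) = \sigma_i (i+1)\ln 2 \ge \sigma_i\, i\ln 2$. Summing over the ``heavy'' levels identified above --- those carrying a constant fraction of the total mass and satisfying $i \gtrsim \frac{k}{\alpha}\log(1+\alpha)$ --- gives $\sum_i H(\pi_i w) \gtrsim \bigl(\frac{k}{\alpha}\log(1+\alpha)\bigr)\cdot\bigl(\text{constant mass}\bigr) = \Omega\bigl(\frac{k}{\alpha}\log(1+\alpha)\bigr)$, and then $H(w) \gtrsim \sum_i H(\pi_i w) - O(1)$ from \eqref{eq:basicSumH} finishes it (the additive $O(1)$ is harmless since we may assume $\frac{k}{\alpha}\log(1+\alpha)$ exceeds any fixed constant, otherwise there is nothing to prove). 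I would phrase the dichotomy cleanly: either there is a level with expansion factor forcing $\alpha = \Omega(k)$ already (this happens when even $\ell_k(2^0)=\ell_k(1)$, the trivial shadow at the top, is too small --- i.e.\ essentially the case $k$ or $n-k$ is tiny, excluded by $\min\{k,n-k\}=\Omega(n)$ in Theorem~\ref{thm:singleBatch} but here folded into the ``$\alpha=\Omega(k)$'' branch), or the heavy-level argument runs and delivers the entropy bound.

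The main obstacle I anticipate is handling the function $\ell_k$ carefully enough: it is defined implicitly through a binomial coefficient with a real top argument, and I need a clean two-sided estimate of $\log \binom{k-1+m}{k}$ in terms of $m$ and $k$ that is valid across the whole range $m \in [1,n-k]$ (including both $m \ll k$ and $m \gg k$), since $\alpha$ ranges from constant up to $\Theta(k)$. Getting the $\log(1+\alpha)$ factor exactly right, rather than just $\Omega(k/\alpha)$, requires the regime $m = \Theta(k/\alpha)$ where $\log\binom{k+m}{k} \asymp m\log(k/m) \asymp \frac{k}{\alpha}\log\alpha$; the bookkeeping to confirm this and to confirm that the ``heavy mass'' really does land in this regime (and not spread thinly over a huge tail of large $i$, which would be fine for entropy but needs the weighted-average argument to be set up as a genuine Markov-type bound) is the part that needs the most care. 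Everything else is assembling Lemmas~\ref{lem:basicWeight} and~\ref{lem:lovaszWeight} and routine manipulation.
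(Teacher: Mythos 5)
Correct, and essentially the paper's own proof: you slice $w$ into the $\pi_i w$, commute $\Delta$ with $\pi_i$, combine Lemmas~\ref{lem:basicWeight} and~\ref{lem:lovaszWeight}, and locate a constant fraction of the $\sigma$-mass at levels $i = \Omega\bigl(\frac{k}{\alpha}\log(1+\alpha)\bigr)$; the paper runs the same averaging via a single cutoff level $j$ (the largest $j$ with tail mass at least $1/4$, using monotonicity of $\ell_k$) instead of your Markov bound, and then estimates $2^j \ge \binom{k+\beta}{\beta} \ge (1+k/\beta)^{\beta}$ exactly as you sketch. (One wording slip: $\ell_k(2^i)$ is increasing in $i$ — it is the expansion factor $k/\ell_k(2^i)$ that decreases — which is what your argument actually uses.)
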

\begin{proof}
Since $S(\Delta w) \le \alpha$ we conclude by Lemma \ref{lem:basicWeight} that
\begin{align*}
	\sum_{i \ge 0} S(\Delta \pi_i w)
	=
	\sum_{i \ge 0} S(\pi_i \Delta w)
	\le 
	S(\Delta w)
	\le \alpha
	\, .
\end{align*}
We also see that $\sum_{i \ge 0} S(\pi_i w) \ge \frac{1}{2}S(w) = \frac{1}{2}$. Therefore there exists a largest
integer $j$ such that $\sum_{i \ge j} S(\pi_i w) \ge \frac{1}{4}$. Since $j$ is the largest such integer we also have
that $\sum_{i=0}^j S(\pi_i w) \ge \frac{1}{4}$. Note that by Lemma \ref{lem:basicWeight}:
\begin{align*}
	2H(w) + 3 \ge 
	\sum_{i \ge 0} H(\pi_i w) = 
	\sum_{i \ge 0} S(\pi_i w) (i+1) \ge 
	(j+1) \sum_{i \ge j} S(\pi_i w) \ge
	\frac{1}{4} (j+1)
	\, ,
\end{align*}
and therefore we either have $j = O(1)$ or $H(w) = \Omega(j)$.

On the other hand Lemma \ref{lem:lovaszWeight} gives that
\begin{align*}
	\alpha \ge 
	\sum_{i=0}^j S(\Delta \pi_i w)
	\ge
	\sum_{i=0}^j S(\pi_i w) \frac{k}{\ell_k(2^i)}
	\ge 
	\frac{1}{4} \cdot \frac{k}{\ell_k(2^j)}
	\, ,
\end{align*}
i.e. $\ell_k(2^j) \ge \frac{k}{4\alpha}$. Let $\beta = \floor{\frac{k}{4\alpha}}-1$.
If $\beta < \frac{k}{8\alpha}$ or $j = O(1)$ then we have $\alpha = \Omega(k)$ as desired.
So assume that $\beta \ge \frac{k}{8\alpha}$ and $H(w) = \Omega(j)$. By definition of 
$\ell_k(2^j)$ we have that:
\begin{align*}
	2^j = 
	\binom{k-1 + \ell_k(2^j)}{k} \ge 
	\binom{k+\beta}{k} =
	\binom{k+\beta}{\beta} \ge 
	\left ( \frac{k+\beta}{\beta} \right )^\beta
	\, .
\end{align*}
Taking the logarithm on both sides yields:
\begin{align*}
	j \ge 
	\lg \left ( 1 + \frac{k}{\beta} \right )
	\cdot \beta
	\ge 
	\frac{k}{8\alpha} \cdot 
	\lg \left ( 1 + 4\alpha \right )
	\, .
\end{align*}
Since $H(w) = \Omega(j)$ we get $H(w) = \Omega \left ( \frac{k}{\alpha} \log (1+\alpha) \right )$
which was what we wanted.
\end{proof}

We now use Lemma \ref{lem:singleBatchTechnical} to prove Theorem \ref{thm:singleBatch}.
\begin{proof}[Proof of Theorem \ref{thm:singleBatch}]
Let $w : \P([n]) \to [0,1]$ be defined by $w(A) = \Prpp{S \sim \D}{S = [n] \setminus A}$. Then $w$ is a weigh-function
with $S(w) = 1$ and $H(w) = \H(\D)$. The support of $w$ is contained in $\binom{[n]}{n-k}$.

Let $f : \P(X) \to X$ be a mapping such that for every $B \in \binom{[n]}{k+1}$ we have
$f(B) \in B$ and
\begin{align*}
	w\left ( B \setminus \set{f(B)} \right )
	=
	\max_{b \in B} \set{w\left ( B \setminus \set{b} \right ) }
	\, ,
\end{align*}
that is, $b = f(B)$ maximizes $w \left ( B \setminus \set{b} \right )$ constrained to $b \in B$.
We define the cost function $c(b,B)$ for all $B \in \binom{[n]}{k+1}$ and $b \in B$ by
$c(b,B) = k+1$ if $b = f(B)$ and $c(b,B) = 0$ otherwise. By definition \eqref{eq:singleBatch}
must hold for this cost function. We plug $c$ into \eqref{eq:singleBatch} to obtain
\begin{align*}
	\frac{1}{n-k}
	\sum_{A \in \binom{[n]}{k}}
		\Prpp{S \sim \D}{S = A}
		\sum_{a \notin A}
			c(a, A \cup \set{a})	
	\le
	\Epp{S \sim \D, s \sim \U([n] \setminus S)}{c(s, S \cup \set{s})}
	\le 
	\alpha
	\, .
\end{align*}
We note that by the definition of $f(B)$ we get:
\begin{align*}
	\frac{1}{n-k}
	\sum_{A \in \binom{[n]}{k}}
		\Prpp{S \sim \D}{S = A}
		\sum_{a \notin A}
			c(a, A \cup \set{a})
	& =
	\frac{1}{n-k}
	\sum_{B \in \binom{[n]}{k+1}}
		\sum_{b \in B}
		\Prpp{S \sim \D}{S = B \setminus \set{b}}
		c(b, B)
	\\
	& =
	\frac{k+1}{n-k}
	\sum_{B \in \binom{[n]}{k+1}}
		\Prpp{S \sim \D}{S = B \setminus \set{f(B)}}
\end{align*}
By the definition of $\Delta w$ we get that:
\begin{align*}
	\Prpp{S \sim \D}{S = B \setminus \set{f(B)}}
	=
	\max_{b \in B} \set {
		\Prpp{S \sim \D}{S = B \setminus \set{b}}
	}
	=
	\max_{b \notin [n] \setminus B} \set{ w(([n]\setminus B) \cup \set{b}) }
	=
	(\Delta w)([n] \setminus B)
	\, .
\end{align*}
That is,
\begin{align*}
	\alpha
	\ge 
	\frac{k+1}{n-k}
	\sum_{B \in \binom{[n]}{k+1}}
		\Prpp{S \sim \D}{S = B \setminus \set{f(B)}}
	=
	\frac{k+1}{n-k}
	\sum_{B' \in \binom{[n]}{n-k-1}}
		(\Delta w)(B')
	=
	\frac{k+1}{n-k}
	S(\Delta w)
	\, .
\end{align*}
Since $k = \Omega(n)$ we get that $S(\Delta w) = O(\alpha)$.
Now Lemma \ref{lem:singleBatchTechnical} gives that either $\alpha = \Omega(n-k)$ or
$\H(\D) = H(w) = \Omega \left ( \frac{n-k}{\alpha} \log (1 + \alpha) \right )$. Since $n-k = \Omega(n)$
this is exactly what we wanted to prove.
\end{proof}

\section{Upper bound}


We present a general lemma about transition graphs that will help us find
distributions that are uniform on relatively small sets that have 
desirable properties. The lemma itself is fairly straight-forward and
similar statements might be known.

\begin{lemma}
	\label{lem:pebbleDistribution}
	Let $G$ be a transition graph, and let $\delta_V$ and $\delta_E$ be the
	minimal non-zero weight of a node and an edge in $G$, respectively,
	and assume that $w_G(S,S')^{-1}$ is an integer for every edge $(S,S')$
	with non-zero weight.
	Let $p$ be the probability of the most probable permutation in the
	memoryless distribution of $G$.
	Let $\eps \in (0,1]$, and $t$ be integer satisfying
	$\frac{1}{p} \ge t \ge \frac{8n}{\delta_V \delta_E \eps}$.
	There exists a set $X$ such that the transition graph $G'$ of $\U(X)$
	satisfies:
	For each node $S$, $w_{G'}(S) = (1 \pm \eps)w_G(S)$ and for each edge
	$(S,S')$, $w_{G'}(S,S') = \left(1 \pm \frac{\eps}{4n}\right)w_G(S,S')$.
	
	Furthermore, if $t \cdot w_G(S) \cdot w_G(S,S')$ is an integer for 
	each edge $(S,S')$ of $G$ then $G' = G$.
\end{lemma}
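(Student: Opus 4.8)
The goal is to extract from the memoryless distribution of $G$ a uniform distribution on a multiset $X$ of permutations that approximates all the node- and edge-weights of $G$ simultaneously. The plan is to take $t$ i.i.d. samples from the memoryless distribution of $G$ (where $t$ is the given integer), and argue that with positive probability the empirical distribution on these $t$ permutations has the desired approximation properties; then $X$ is that (multi)set of size $t$. The ``furthermore'' clause — where $G'=G$ exactly — is handled differently: it is a deterministic construction exploiting that all the relevant quantities $t\cdot w_G(S)\cdot w_G(S,S')$ are integers, so one can build $X$ by a direct combinatorial splitting argument with no randomness at all.

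**Key steps.** First I would set up the empirical weights: for a sample $\pi_1,\dots,\pi_t$, let $w_{G'}(S)$ be the fraction of the $\pi_j$ with $\pi_j([|S|])=S$, and $w_{G'}(S,S')$ the conditional fraction among those. By construction $\U(X)\to G'$ for $X=\{\pi_1,\dots,\pi_t\}$. Second, for a fixed node $S$ with $w_G(S)\ge \delta_V$, the count $t\,w_{G'}(S)$ is a sum of $t$ independent indicators with mean $t\,w_G(S)\ge t\delta_V \ge 8n/(\delta_E\eps)$, so a Chernoff bound gives $\Pr[\,|w_{G'}(S)-w_G(S)|>\eps w_G(S)\,]$ exponentially small in $t\delta_V\eps^2$; since the number of nodes is $2^n \le e^n$, choosing the constants in the hypothesis $t\ge 8n/(\delta_V\delta_E\eps)$ so that the failure probability beats a union bound over all nodes. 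Third, for an edge $(S,S')$ with $w_G(S,S')\ge\delta_E$, condition on the event $\{\pi_j([|S|])=S\}$; the number of such $j$ is $\approx t\,w_G(S)$, and among them the count matching the edge is again a Chernoff-concentrated binomial, giving a relative error $\eps/(4n)$ with the stated parameters. A union bound over all $\le e^n$ nodes and all $\le n\,e^n$ edges then shows the good event has positive probability, producing the desired $X$.

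**The ``$G'=G$'' clause.** When $t\,w_G(S)\,w_G(S,S')\in\mathbb Z$ for every edge, I would instead build $X$ explicitly so the empirical weights equal $w_G$ on the nose. One clean way: process nodes in decreasing order of size; maintain for each node $S$ a set $P(S)$ of $t\,w_G(S)$ ``tokens'' (sub-walks from $[n]$ down to $S$), starting with $P([n])$ the single node $[n]$ repeated $t$ times; then at $S$, partition $P(S)$ into pieces of size $t\,w_G(S)\,w_G(S,S')$ for each outgoing edge $(S,S')$ — an integer by hypothesis, and these sum to $t\,w_G(S)$ since $\sum_{S'}w_G(S,S')=1$ — and append the edge to each token, distributing them into $P(S')$. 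Since node weights are determined by edge weights along all incoming paths, consistency is automatic. The $t$ completed walks down to $\emptyset$ give $t$ permutations, and by construction each edge is traversed exactly $t\,w_G(S)\,w_G(S,S')$ times, so $\U(X)\to G$.

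**Main obstacle.** The technical heart is the second clause's quantitative Chernoff/union-bound bookkeeping: one needs the edge-level relative error $\eps/(4n)$ (not just $\eps$) because edge errors compound multiplicatively along a length-$n$ walk to produce the node-weight error — and propagating $(1\pm\eps/(4n))^n \subseteq (1\pm\eps/2)$ back into the node-weight guarantee is what forces that tighter edge tolerance and the factor $8n$ (rather than a constant) in the lower bound on $t$. Getting the constants to line up so the single hypothesis $t\ge 8n/(\delta_V\delta_E\eps)$ covers both the direct node concentration and the compounded edge concentration, against a $2^n$-sized union bound, is the delicate part; everything else is routine.
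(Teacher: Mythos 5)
There is a genuine gap in your main argument: the i.i.d.\ sampling plus Chernoff/union-bound step cannot deliver the edge accuracy $\eps/(4n)$ with the value of $t$ the lemma allows. For an edge $(S,S')$ the expected number of samples witnessing it is about $t\,w_G(S)\,w_G(S,S')\ge t\,\delta_V\delta_E\ge 8n/\eps$, and you need relative error $\eps/(4n)$ on this count; the Chernoff exponent is then of order $\left(\frac{\eps}{4n}\right)^2\cdot\frac{8n}{\eps}=\frac{\eps}{2n}\ll 1$, so the failure probability per edge is essentially $1$, let alone small enough to beat a union bound over the up to $n2^n$ edges (which would require $t=\Omega\bigl(n^3/(\delta_V\delta_E\eps^2)\bigr)$). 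Since the lemma must hold for every integer $t$ with $\frac1p\ge t\ge \frac{8n}{\delta_V\delta_E\eps}$, and the downstream applications need $|X|=t$, you cannot simply take $t$ larger. A further symptom is that your argument never uses the hypotheses $t\le 1/p$ and that $w_G(S,S')^{-1}$ is an integer; in the paper these are exactly what guarantees that the constructed permutations are \emph{distinct} (so $X$ is a set of size $t$), whereas random sampling only yields a multiset, and you would need an extra argument before collapsing duplicates.

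The paper's proof is instead a deterministic rounding construction, essentially your token scheme for the ``furthermore'' clause but run in the general case: $t$ pebbles start at $[n]$, and at each node $S$ holding $s$ pebbles one moves between $\floor{s\,w_G(S,T)}$ and $\ceil{s\,w_G(S,T)}$ pebbles along each outgoing edge $(S,T)$, using the flow-based even-split lemma so that the split is balanced \emph{within each class of pebbles sharing the same history}. The nested ceilings collapse (here the integrality of $w_G(S,S')^{-1}$ is used), so each full path carries at most $\ceil{p't}\le 1$ pebbles by $t\le 1/p$, giving $t$ distinct permutations; the rounding error per edge is at most one pebble out of at least $\tfrac12 t\,\delta_V$ pebbles at the node, which with $t\ge 8n/(\delta_V\delta_E\eps)$ yields the $\bigl(1\pm\frac{\eps}{4n}\bigr)$ edge bound, and your observation about compounding $\bigl(1\pm\frac{\eps}{4n}\bigr)^n$ into $(1\pm\eps)$ for node weights is then exactly how the paper concludes. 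So your exact-case construction is on the right track, but to repair the proposal you should apply that deterministic rounding scheme in the approximate case as well, rather than sampling.
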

\begin{proof}
Put $t$ pebbles on the node $[n]$ of $G$ and $0$ pebbles on each other node of
$G$. We now run the following algorithm:
For each $i = n,n-1,n-2,\ldots,1$ do the following. For each subset $S \subset [n]$
of size $i$ consider the node $S$ of $G$, and say that there are currently $s$ pebbles
on $S$. Now let $T_1,\ldots,T_i$ be all subsets of $S$ of size $i-1$. We then move
all the pebbles from $S$ to the nodes $T_1,\ldots,T_i$ such that we move
$\approx s w_G(S,T_i)$ pebbles from $S$ to $T_i$. That is, we move at least
$\floor{s w_G(S,T_i)}$ pebbles and at most $\ceil{s w_G(S,T_i)}$ pebbles from $S$
to $T_i$. We do this in the following way. Let $V$ be the set of pebbles in $S$
and partition $V$ as $V = V_1 \cup \ldots V_k$ where each $V_i$ holds a set
of pebbles that have all followed the same path from $[n]$ to $S$. Then we
use Lemma \ref{lem:evenSplit} with $\delta_i = w_G(S,T_i)$ to create a function
$f : V \to [k]$. A pebble $v \in V$ is then moved to $T_{f(v)}$.

When the algorithm is finished all the pebbles are at the node $\emptyset$. 
Each pebble went from $[n]$ to $\emptyset$ through a path $[n] = S_n, S_{n-1}, \ldots, S_0 = \emptyset$.
The pebble then correspond to the unique permutation satisfying that 
$\set{\pi(i)} = S_i \setminus S_{i-1}$.
For any permutation $\pi$, since we used Lemma \ref{lem:evenSplit} to move the
pebbles, the number of pebbles that took the path corresponding to $\pi$,
say $S_n, S_{n-1}, \ldots, S_0$, is at most
\[ \ceil{w_G(S_1,S_0)\ceil{\ldots\ceil{w_G(S_n,S_{n-1})t}\ldots}}
= \ceil{w_G(S_1,S_0) \cdots w_G(S_n,S_{n-1})\cdot t}, \]
which equals $\ceil{p't}$, where $p'$ is the probability of getting $\pi$
when drawing a permutation from the memoryless distribution of $G$.
Since $p't \le 1$ by assumption this shows that any two pebbles followed
different paths in $G$ and therefore correspond to different permutations.
Let $\D$ be the uniform distribution
of the permutations corresponding to the pebbles, and let $G'$ be the 
transition graph of $\D$.

We will prove that $w_{G'}(S,S') = \left ( 1 \pm \frac{\eps}{4n} \right )w_G(S,S')$
for all edges $(S,S')$. Assume for the sake of contradiction that this is false, and
let $(S,S')$ be an edge for which it does not hold with the largest possible value
of $\abs{S}$.
We can calculate the weight of $S$ in $G'$ by considering all paths from $[n]$ to
$S$ and summing the products of the weights of the edges on each path. These edge
weights are at most a factor $1-\frac{\eps}{4n}$ smaller than the corresponding 
edge weights in $G$, which can be used to calculate $w_G(S)$. This implies
that the weight of $S$ in $G'$ is at least 
$\left ( 1 - \frac{\eps}{4n} \right )^{n-\abs{S}}w_G(S) \ge 
\left ( 1 - \frac{\eps}{4n} \right )^n w_G(S)$, and hence $w_{G'}(S)$ is at least
$\left(1-\eps/2\right)w_G(S) \ge \frac{1}{2} w_G(S) \ge \frac{1}{2}\delta_V$.
The number of pebbles that where in $S$ was exactly $t w_{G'}(S)$.
So the number of pebbles that where moved from $S$
to $S'$ is either $\floor{t w_{G'}(S)w_G(S,S')}$ or $\ceil{t w_{G'}(S)w_G(S,S')}$.
The weight $w_{G'}(S,S')$ is therefore at most
\begin{align}
	\notag
	\frac{\ceil{t w_{G'}(S)w_G(S,S')}}{t w_{G'}(S)}
	< 
	\frac{1 + t w_{G'}(S)w_G(S,S')}{t w_{G'}(S)}
	& =
	w_G(S,S') + \frac{1}{t w_{G'}(S)}
	\\
	\notag
	& \le 
	\left ( 1 + \frac{\eps}{4n} \right )
	w_G(S,S')
	\, ,
\end{align}
where the last inequality follows from the fact that 
$\frac{1}{t w_{G'}(S)} \le \frac{2}{t w_G(S)}$ and the definition of $t$.
We get that $w_{G'}(S,S')$ is at least $\left ( 1 - \frac{\eps}{4n} \right )w_G(S,S')$
in the same manner. A contradiction. So the assumption was wrong and 
$w_{G'}(S,S') = \left ( 1 \pm \frac{\eps}{4n} \right )w_G(S,S')$ holds for every
edge of $G'$.

By using the fact that we can calculate the node weights from the edge weights
it is apparent that $w_{G'}(S) = \left ( 1 \pm \frac{\eps}{4n} \right )^nw_G(S)$
which clearly implies $w_{G'}(S) = \left ( 1 \pm \eps \right ) w_G(S)$ as desired.
\end{proof}

Let $G$ be the transition graph of $\U(S_n)$ and let 
$t = \lcm_{k \in [n]} \set { \binom{n}{k} \cdot k }$. For each edge $(S,S')$ 
of $G$ we have $w_G(S)w_G(S,S') = \binom{n}{k}^{-1} \cdot k^{-1}$ and $tw_G(S) w_G(S,S')$
is an integer. Therefore $G' = G$ in Lemma \ref{lem:pebbleDistribution}, and there exists
a distribution, which is uniform on a set of $t$ permutations that has the same transition
graph as $\U(S_n)$. By Proposition \ref{prop:minwiseEqualsBackwards} this distribution is
exact minwise and exact backwards uniform as well. It is an easy exercise to prove
that $t = \lcm(1,2,\ldots,n)$, and the proof is deferred to Lemma \ref{lem:numberTheoryLCM}
The result is summarized below.

\begin{corollary}
	\label{cor:optimalMinWiseUniform}
	There exists a subset $X \subset S_n$ of size
	$\lcm(1,2,\ldots,n) = e^{n-o(n)}$ such that $\U(X)$ is
	exact minwise and exact backwards uniform.
\end{corollary}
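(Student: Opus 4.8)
The plan is to realise the transition graph $G^U_n$ of $\U(S_n)$ by a \emph{finite} uniform family, running the pebble construction of Lemma~\ref{lem:pebbleDistribution} with the least number of pebbles that keeps all intermediate counts integral. Write $G = G^U_n$. If $\abs S = k$ then $\pi([k])$ is uniform over the $\binom nk$ subsets of size $k$, and conditioned on $\pi([k]) = S$ the value $\pi(k)$ is uniform in $S$; hence $w_G(S) = \binom nk^{-1}$ and $w_G(S, S\setminus\set s) = \tfrac1k$ for every $s \in S$. Thus every edge reciprocal $w_G(S,S')^{-1} = k$ is an integer, the memoryless distribution of $G$ is $\U(S_n)$ itself (a path has probability $\prod_{i=1}^n \tfrac1i = \tfrac1{n!}$), so its most probable permutation has probability $p = \tfrac1{n!}$, and $\delta_V = \binom{n}{\floor{n/2}}^{-1}$, $\delta_E = \tfrac1n$. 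Set
\[
t = \lcm_{k \in [n]} \set{k\binom nk}.
\]
Then for every edge $t\cdot w_G(S)\cdot w_G(S,S') = t\big/\big(k\binom nk\big)$ is an integer.

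Next I would check that $t$ lies in the window required by Lemma~\ref{lem:pebbleDistribution} with $\eps = 1$. Upper bound: $k\binom nk = n!/((k-1)!(n-k)!)$ divides $n!$, so $t \mid n!$ and $t \le n! = 1/p$. Lower bound: we need $t \ge 8n/(\delta_V\delta_E) = 8n^2\binom{n}{\floor{n/2}}$; by the number-theoretic identity of Lemma~\ref{lem:numberTheoryLCM} we have $t = \lcm(1,2,\ldots,n) = e^{n-o(n)}$, whereas $8n^2\binom{n}{\floor{n/2}} = e^{(\ln 2)n + o(n)}$, so the inequality holds for all sufficiently large $n$; the finitely many small $n$ are handled directly (for $n \le 3$ one may simply take $X = S_n$, and in general the pebble construction in the proof of Lemma~\ref{lem:pebbleDistribution} never rounds when $t\cdot w_G(S)w_G(S,S')$ is integral, so it produces an exact family of size $t$ regardless). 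The ``furthermore'' clause of Lemma~\ref{lem:pebbleDistribution} now yields a set $X \subset S_n$ with $\abs X = t$ whose transition graph equals $G^U_n$, that is, $\U(X) \to G^U_n$.

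To finish, $\U(X) \to G^U_n$ is precisely statement (v) of Proposition~\ref{prop:minwiseEqualsBackwards}, which is equivalent to (i) and (iii); hence $\U(X)$ is exact backwards uniform and exact minwise, and $\abs X = t = \lcm(1,\ldots,n) = e^{n-o(n)}$, as claimed. Combined with the matching lower bound of \cite{broder98minwise}, this pins down the minimal size of such a family as $\lcm(1,\ldots,n)$.

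The only non-formal ingredient, and the step I expect to need the most care, is Lemma~\ref{lem:numberTheoryLCM}: the identity $\lcm_{k\in[n]}\set{k\binom nk} = \lcm(1,\ldots,n)$ together with the estimate $\lcm(1,\ldots,n) = e^{n-o(n)}$. I would prove the identity by first noting $k\binom nk = n\binom{n-1}{k-1}$, so $t = n\cdot\lcm\set{\binom{n-1}{j} : 0 \le j \le n-1}$, and then invoking the classical fact $\lcm\set{\binom mj : 0\le j\le m} = \lcm(1,\ldots,m+1)/(m+1)$ with $m = n-1$, giving $t = n\cdot\lcm(1,\ldots,n)/n = \lcm(1,\ldots,n)$; that classical fact is itself a short comparison of $p$-adic valuations via Kummer's theorem on base-$p$ carries. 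The asymptotic $\lcm(1,\ldots,n) = e^{\psi(n)}$ with $\psi(n) = n - o(n)$ is Chebyshev's estimate for the second Chebyshev function (equivalently, the prime number theorem).
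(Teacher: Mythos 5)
Your proposal matches the paper's own argument essentially step for step: take $t=\lcm_{k\in[n]}\set{k\binom{n}{k}}$, invoke the ``furthermore'' clause of Lemma~\ref{lem:pebbleDistribution} on the transition graph of $\U(S_n)$ to get a uniform family of $t$ permutations with transition graph $G^U_n$, conclude via Proposition~\ref{prop:minwiseEqualsBackwards}, and identify $t=\lcm(1,\ldots,n)$ as in Lemma~\ref{lem:numberTheoryLCM}. If anything you are more careful than the paper, which silently skips the window condition $\frac1p\ge t\ge \frac{8n}{\delta_V\delta_E\eps}$ that you explicitly discuss (and correctly observe is not really needed once integrality makes all the roundings exact).
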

This is tight, since it is shown in \cite{broder98minwise} that for any uniform
distribution on $X$ that is exact minwise, the size of $X$ is
divisible by $\lcm(1,2,\ldots,n)$. This also improves on the previous upper bound
from \cite{broder98minwise} of $4^{n-O(\log^2 n)}$.

Now we will show that for any $\alpha > 1$ there exists a backwards 
$\alpha$-uniform distribution
$\D$ that is the uniform distribution on
$\exp\left(O\left(\frac{n}{\alpha}\cdot \left( 1 + \log^2 \alpha \right )\right)\right )$
permutations. We will do this by first describing a preliminary distribution $\D'$, and
then convert $\D'$ into a similar distribution $\D$ using Lemma \ref{lem:pebbleDistribution}.

For simplicity we assume that $n$ and $\alpha$ are powers of two. Let $t = \frac{n}{\alpha}$.
We now describe an algorithm that takes a positive integer $k$ and a set $S$ of
$2^k t$ integers and outputs a permutation of $S$. We describe the permutation, that
the algorithm finds when $S = [2^kt]$. For general $S$ we order the elements of 
$S$ as $S = \set{s_1,s_2,\ldots,s_{2^kt}}$ such that $s_1 < s_2 < \ldots < s_{2^kt}$.
Then we find a random permutation $\pi'$ of $[2^kt]$, and the algorithm outputs
$\pi : S \to S$ defined by $\pi(s_i) = s_{\pi(i)}$. So from now on we may
assume that $S = [2^k t]$. If $k = 1$ the algorithm
simply chooses a permutation uniformly at random $S$ and outputs that permutation. If
$k > 1$ the algorithm partitions $S$ into two sets $S_0 = [2^{k-1}t]$ and
$S_1 = [2^kt] \setminus [2^{k-1}t]$ consisting of
the smallest and largest half of $S$, respectively. The algorithm
now recursively finds permutations of $S_0$ and $S_1$, which we call $\pi_0$
and $\pi_1$, respectively. We assume that $\pi_0$ and $\pi_1$ are independent.
Let $S'$ be the set of the $t$ last elements of $\pi_0$ and $\pi_1$, i.e.,
\begin{align}
	\notag
	S' = \pi_0 \left ( S_0 \setminus [2^{k-1}t-t] \right )
		\cup 
		\pi_1 \left ( S_1 \setminus [2^t-t] \right )
	\, .
\end{align}
Let $s'_1,\ldots,s'_{2t}$ be a uniformly random ordering of $S'$. The result of
the algorithm is now obtained by concatenating the first $2^{k-1}t-t$ elements
of $\pi_0$ with the first $2^{k-1}t-t$ elements of $\pi_1$ and then appending 
$s'_1, \ldots, s'_{2t}$. That is, the output of the algorithm is the following
permutation:
\begin{align}
	\notag
	\pi(i) 
	=
	\left \{
		\begin{array}{cc}
			\pi_0(i) & i \le 2^{k-1}t-t \\
			\pi_1 \left ( i - (2^{k-1}t-t) \right ) & 2^{k-1}t-t < i \le 2^kt-2t \\
			s'_{i-(2^k-2t)} & 2^k-2t < i 
		\end{array}
	\right .
	\, .
\end{align}
For each value of $k$ this gives a distribution of permutations on $\set{1,2,\ldots,2^k t}$
which we call $\D_k$.

We now prove Lemma \ref{lem:UBLastElements} and Lemma \ref{lem:UBFirstElements} which
give a lower bound on the non-zero weights of the nodes in the transition
graph of $\D_k$.

\begin{lemma}
	\label{lem:UBLastElements}
	Let $k$ be a positive integer and let $X \subset [2^kt]$ be a set of
	size $s$. The probability that all elements of $X$ are among the last $t$ elements
	in a permutation chosen randomly according to $\D_k$ is at least $2^{-2sk}$.
\end{lemma}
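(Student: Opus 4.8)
The plan is to prove the bound by induction on $k$. The base case $k=1$ is easy: when $k=1$ the permutation of $[2^kt] = [2t]$ is uniformly random, so the probability that a given set $X$ of size $s$ sits inside the last $t$ positions is $\binom{t}{s}/\binom{2t}{s} \ge 2^{-s} \ge 2^{-2s}$, which is at least $2^{-2s\cdot 1}$ as required.

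For the inductive step, fix $k>1$ and a set $X\subseteq[2^kt]$ of size $s$. Recall the recursive construction: $[2^kt]$ is split into the bottom half $S_0$ and top half $S_1$; we recursively build independent permutations $\pi_0$ of $S_0$ and $\pi_1$ of $S_1$; then $S'$ is the union of the last $t$ elements of each of $\pi_0,\pi_1$ (so $|S'|=2t$), and the final $2t$ positions of the output are a uniformly random ordering of $S'$. For $X$ to land in the last $t$ positions of the output, I would argue it suffices that (a) every element of $X\cap S_0$ lies among the last $t$ elements of $\pi_0$, (b) every element of $X\cap S_1$ lies among the last $t$ elements of $\pi_1$, and (c) in the final random ordering of the $2t$-element set $S'$, all of $X$ ends up in the last $t$ slots. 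Write $s_0 = |X\cap S_0|$ and $s_1 = |X\cap S_1|$, so $s_0+s_1 = s$. By independence of $\pi_0$ and $\pi_1$ and of the final shuffle, the probability factors as a product of the probabilities of (a), (b), (c).

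For (a) and (b) I would apply the induction hypothesis at level $k-1$ to the sets $X\cap S_0$ and $X\cap S_1$ (after the order-isomorphism identifying $S_0$ and $S_1$ with $[2^{k-1}t]$), giving probabilities at least $2^{-2s_0(k-1)}$ and $2^{-2s_1(k-1)}$ respectively; when $s_0$ or $s_1$ is $0$ the corresponding probability is $1 = 2^0$, consistent with the bound. For (c), conditioned on (a) and (b) holding, $X$ is a subset of $S'$ of size $s$, and a uniformly random ordering of the $2t$-element set $S'$ places all of $X$ in the last $t$ positions with probability $\binom{2t-s}{t-s}/\binom{2t}{t}\ge$, more simply, $\ge \prod_{j=0}^{s-1}\frac{t-j}{2t-j}\ge 2^{-s}$. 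Multiplying, the total probability is at least $2^{-2s_0(k-1)}\cdot 2^{-2s_1(k-1)}\cdot 2^{-s} = 2^{-2(s_0+s_1)(k-1)-s} = 2^{-2s(k-1)-s} = 2^{-s(2k-1)}\ge 2^{-2sk}$, completing the induction.

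The main thing to be careful about is step (c) and the exact accounting of which positions of the output correspond to $S'$: the output's last $2t$ positions are exactly $s'_1,\dots,s'_{2t}$, a uniform ordering of $S'$, so "$X$ among the last $t$ of the output" is implied by "$X\subseteq S'$ and $X$ among the last $t$ of this ordering", and the latter is a clean hypergeometric computation. The other minor point is that the claimed sufficient event is genuinely a subset of the target event — it is conceivable that $X$ lands in the last $t$ positions via other configurations — but since we only need a lower bound, restricting to this one favorable scenario is enough, and the slack between $2k-1$ and $2k$ in the exponent comfortably absorbs the factor $2^{-s}$ from the final shuffle.
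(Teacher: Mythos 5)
Your overall strategy is exactly the paper's: induct on $k$, split $X$ into its intersections with the two halves, use independence of $\pi_0,\pi_1$ to get both pieces of $X$ into the last $t$ elements of their respective halves (hence $X\subseteq S'$), and then ask the final uniform shuffle of the $2t$-element set $S'$ to push $X$ into the last $t$ output positions. That restriction to one favorable scenario is fine for a lower bound, and the factorization by independence is correct.

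However, the hypergeometric inequality you invoke twice is false as stated. You claim $\binom{t}{s}/\binom{2t}{s}=\prod_{j=0}^{s-1}\frac{t-j}{2t-j}\ge 2^{-s}$, but each factor satisfies $\frac{t-j}{2t-j}\le \frac12$ (with strict inequality for $j\ge 1$), so this product is \emph{at most} $2^{-s}$; concretely, for $t=s=2$ it equals $1/6<1/4$. The correct lower bound, and the one the paper uses, is $\binom{t}{s}/\binom{2t}{s}> 2^{-2s}$ (this needs a short global argument, e.g.\ comparing with the endpoint $1/\binom{2t}{t}>4^{-t}$, since the factor-by-factor bound $\frac{t-j}{2t-j}\ge\frac14$ fails for $j$ near $t$). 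With the corrected constant your base case still gives $2^{-2s}$, and your inductive step gives $2^{-2s_0(k-1)}\cdot 2^{-2s_1(k-1)}\cdot 2^{-2s}=2^{-2sk}$ exactly; so the lemma still follows, but the ``slack between $2k-1$ and $2k$'' you rely on does not exist — the accounting is tight, and your stated exponent $2^{-s(2k-1)}$ is not justified. Fix the inequality (and justify it), and your proof coincides with the paper's.
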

\begin{proof}
We prove the claim by induction on $k$. First say that $k=1$. Then the probability
that all elements from $X$ are among the last $t$ elements is exactly
$\frac{\binom{t}{s}}{\binom{2t}{s}} > 2^{-2s}$ as desired.

Now say that $k > 1$, and assume that the claim holds for the distribution
$\D_{k-1}$.
Let $A_0 = [2^{k-1}t], A_1 = [2^kt] \setminus [2^{k-1}t]$
and set $X_0 = X \cap A_0, X_1 = X \cap A_0$. The algorithm first computes 
two permutations of $A_0$ and $A_1$. The probability that all elements of $X_i$
are among the last $t$ elements in the permutation of $A_i$ is at least 
$2^{-2(k-1)\abs{X_i}}$ by the induction hypothesis. So by the independence of
the two permutations, the probability that all elements of $X_i$ are among the last $t$
elements in the permutation of $A_i$ for $i=0$ and $i=1$ is at least
$2^{-2(k-1)\abs{X_0}} \cdot 2^{-2(k-1)\abs{X_1}} = 2^{-2(k-1)s}$. That
is, the probability that all elements of $X$ are among the last $2t$ elements
of the permutation is at least $2^{-2(k-1)s}$. Now assume that this is the case.
Then using the same analysis as in the case $k=1$ we see that that the probability that
all elements of $X$ are among the last $t$ elements in the permutation is at least 
$\frac{\binom{t}{s}}{\binom{2t}{s}} > 2^{-2s}$. So the probability that all
elements of $X$ are contained among the last $t$ elements in the permutation
must be at least $2^{-2(k-1)s} \cdot 2^{-2s} = 2^{-2ks}$ as desired.
\end{proof}

\begin{lemma}
	\label{lem:UBFirstElements}
	Let $k$ be a positive integer and $X \subset [2^kt]$ a set of
	$s$ positive integers. The probability the the set of the first $s$ elements
	is $X$ in a permutation chosen randomly according to $\D_k$ is either $0$
	or at least $2^{-2k^2t}$.
\end{lemma}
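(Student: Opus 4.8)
The plan is to prove Lemma~\ref{lem:UBFirstElements} by induction on $k$, mirroring the recursive structure of the algorithm defining $\D_k$. The base case $k=1$ is immediate: here $\D_1$ is the uniform distribution on $S_{2t}$, so the set of the first $s$ elements is $X$ with probability exactly $\binom{2t}{s}^{-1}$ whenever $s \le 2t$ (and $0$ otherwise), and $\binom{2t}{s}^{-1} \ge 2^{-2t} \ge 2^{-2t}$, which is at least $2^{-2k^2 t}$ for $k=1$. So I only need to worry about the inductive step.

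For $k > 1$, recall that the output permutation $\pi$ of $[2^k t]$ is built from independent permutations $\pi_0$ of $S_0 = [2^{k-1}t]$ and $\pi_1$ of $S_1 = [2^kt]\setminus[2^{k-1}t]$, each drawn from $\D_{k-1}$ (after the order-isomorphism identification), by first listing the initial $2^{k-1}t - t$ elements of $\pi_0$, then the initial $2^{k-1}t - t$ elements of $\pi_1$, then a uniformly random ordering of the $2t$-element ``tail'' set $S'$. Hence, for the event ``the first $s$ elements of $\pi$ equal $X$'' to be non-zero, we must have $s \le 2^k t - 2t$ or else $X$ must in particular contain all of the first $2^{k-1}t-t$ elements of $\pi_0$; I will split on where $X$ sits relative to the boundary $2^{k-1}t - t$. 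The clean case is $s \le 2^{k-1}t - t$: then the first $s$ elements of $\pi$ are exactly the first $s$ elements of $\pi_0$, so $X \subseteq S_0$ necessarily and the probability equals $\Pr_{\pi_0 \sim \D_{k-1}}[\text{first } s \text{ elements} = X]$, which by the induction hypothesis is $0$ or at least $2^{-2(k-1)^2 t} \ge 2^{-2k^2 t}$. The case $2^{k-1}t - t < s \le 2^k t - 2t$ is similar but now $X$ decomposes as (first $2^{k-1}t-t$ elements of $\pi_0$) $\cup$ (first $s - (2^{k-1}t-t)$ elements of $\pi_1$), and since $\pi_0,\pi_1$ are independent the probability factors as a product of two terms each of the form covered by the induction hypothesis, giving at least $2^{-2(k-1)^2 t} \cdot 2^{-2(k-1)^2 t} = 2^{-4(k-1)^2 t}$, and one checks $4(k-1)^2 \le 2k^2$ for $k \ge 2$.

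The last and slightly more delicate case is $2^k t - 2t < s \le 2^k t$, where $X$ additionally must include the first $s - (2^k t - 2t)$ elements of the random ordering of $S'$. Here I would condition on the pair $(\pi_0,\pi_1)$, which determines $S'$ and the first $2^k t - 2t$ output positions; the event then requires that $X \setminus (\text{those first } 2^k t - 2t \text{ positions})$ be a prefix of the random ordering of $S'$, which happens (given a fixed admissible $S'$) with probability $\binom{2t}{\,s - (2^k t - 2t)\,}^{-1}$ over the choice of ordering (reading only which set, not which sequence, is the prefix). Summing the non-zero contributions over admissible $(\pi_0,\pi_1)$ and using the induction hypothesis on each of $\pi_0,\pi_1$ to lower-bound those contributions — being careful that we only need the sum to be non-zero to be at least one admissible term's worth — gives a bound of at least $2^{-4(k-1)^2 t} \cdot \binom{2t}{\,\cdot\,}^{-1} \ge 2^{-4(k-1)^2 t} \cdot 2^{-2t} = 2^{-(4(k-1)^2 + 2)t}$, and $4(k-1)^2 + 2 \le 2k^2$ holds for $k \ge 2$.

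I expect the main obstacle to be the bookkeeping in this final case: one must be careful that the decomposition of $X$ into an ``$S_0$-prefix part'', an ``$S_1$-prefix part'', and an ``$S'$-prefix part'' is forced (the structure of $\pi$ leaves no choice once $s$ and $X$ are fixed), and that when the middle-block length $2^{k-1}t - t$ coincides exactly with a block boundary the degenerate subcases still obey the claimed bound; also the constant-chasing $4(k-1)^2 + 2 \le 2k^2$ must be verified at the smallest relevant value $k=2$ (where it reads $6 \le 8$) and is then monotone. None of this is deep, but it is the place where an off-by-one in the index ranges could invalidate the induction, so I would write those inequalities out explicitly.
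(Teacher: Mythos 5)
Your overall structure (induction on $k$, splitting on where $s$ falls relative to the block boundaries, and handling the tail block by conditioning on $(\pi_0,\pi_1)$ and paying a $\binom{2t}{\cdot}^{-1}\ge 2^{-2t}$ factor for the random ordering of $S'$) matches the paper, and your base case and the case $s\le 2^{k-1}t-t$ are fine. But the inductive step does not close as you have written it. In the middle and last cases you apply the induction hypothesis to \emph{both} halves, getting $2^{-2(k-1)^2t}\cdot 2^{-2(k-1)^2t}=2^{-4(k-1)^2t}$ (resp.\ $2^{-(4(k-1)^2+2)t}$), and you claim $4(k-1)^2\le 2k^2$ (resp.\ $4(k-1)^2+2\le 2k^2$) for all $k\ge 2$. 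That inequality is false for $k\ge 4$ (at $k=4$: $36>32$, and $38>32$), and it is not ``then monotone''---the left side grows like $4k^2$ against $2k^2$ on the right. More structurally, using the IH on both halves forces a recurrence of the form $f(k)=2f(k-1)$, so the exponent you can prove this way is of order $2^k t$, exponential in $k$, and no quadratic bound $2k^2t$ can come out of it.

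The missing idea is that the factor coming from $\pi_0$ should \emph{not} be bounded by the induction hypothesis but by Lemma~\ref{lem:UBLastElements}: in the middle case, ``the first $2^{k-1}t-t$ elements of $\pi_0$ form $X_0$'' is the same event as ``the last $t$ elements of $\pi_0$ are the $t$-element set $A_0\setminus X_0$,'' which Lemma~\ref{lem:UBLastElements} bounds below by $2^{-2(k-1)t}$ (linear in $k$ in the exponent), and then the IH is used only once, on $\pi_1$, giving $2^{-2(k-1)t}\cdot 2^{-2(k-1)^2t}\ge 2^{-2k^2t}$. Likewise in the last case the events on $\pi_0$ and $\pi_1$ are containment-in-the-tail events (each $A_i\setminus X_i$, of size at most $t$, must lie among the last $t$ elements of $\pi_i$), which is exactly what Lemma~\ref{lem:UBLastElements} is for---note these are not exact-prefix events, so the IH does not even directly apply there---and the bound becomes $\left(2^{-2(k-1)t}\right)^2\cdot 2^{-2t}\ge 2^{-2k^2t}$, using $4(k-1)+2\le 2k^2$, i.e.\ $2(k-1)^2\ge 0$. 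With that substitution your case analysis goes through; without it, the induction fails from $k=4$ onward.
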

\begin{proof}
We prove the claim by induction on $k$. First assume that $k=1$. Then the probability
is exactly $\frac{1}{\binom{2t}{s}} > 2^{-2t} = 2^{-2k^2t}$ as desired.

Now say that $k > 1$ and assume that the claim holds for $\D_{k-1}$.
Let $A_0 = [2^{k-1}t], A_1 = [2^kt] \setminus [2^{k-1}t]$
and set $X_0 = X \cap A_0, X_1 = X \cap A_0$. The algorithm first recursively
computes  permutations $\pi_i$ of $A_i$ for $i=0,1$ and then uses them to compute
a permutation $\pi$ of $[2^kt]$.
We now consider three cases:
\begin{itemize}
	\item[(1)] $s \le (2^{k-1}-1)t$.
	\item[(2)] $(2^{k-1}-1)t < s \le (2^k-2)t$.
	\item[(3)] $(2^k-2)t < s$.
\end{itemize}

\textbf{Case (1)}: Whether the first $s$ elements of $\pi$
is $X$ depends only on $\pi_0$, and by the induction hypothesis
the probability is either $0$ or at least $2^{-2(k-1)^2t} > 2^{-2k^2t}$.

\textbf{Case (2)}: The first $(2^{k-1}-1)t$ elements
of $\pi$ are elements from $A_0$ and the next $s-(2^{k-1}-1)t$ elements
are elements from $A_1$. This means that if $\abs{X_0} \neq (2^{k-1}-1)t$ the
probability is $0$, so assume that $\abs{X_0} = (2^{k-1}-1)t$. Now the first
$s$ elements of $\pi$ is $X$ if and only if the last $t$ elements 
of the $\pi_0$ is $A_0 \setminus X_0$ and the first 
$s-(2^{k-1}-1)t$ elements of the permutation of $A_1$ is $X_1$. By
Lemma \ref{lem:UBLastElements} and the induction hypothesis applied to $\D_{k-1}$
this happens with probability at least
$2^{-2(k-1)t} \cdot 2^{-2(k-1)^2t} > 2^{-2k^2t}$.

\textbf{Case (3)}: The set of the first $s$ elements of $\pi$ is 
$X$ exactly if the last elements of $\pi$ are $[2^kt] \setminus X =
A_0 \setminus X_0 \cup A_1 \setminus X_1$.
If $A_i \setminus X_i$ contains more than $t$ elements for $i=0$ or $i=1$
the probability that this happens is $0$, so assume that $\abs{A_i \setminus X_i} \le t$.
The last elements of $\pi$ are $A_0 \setminus X_0 \cup A_1 \setminus X_1$
only if the last $t$ elements of $\pi_i$ contains $A_i \setminus X_i$
for $i=0,1$. The latter happens with probability at least
$\left ( 2^{-2(k-1)t} \right )^2$ by Lemma \ref{lem:UBLastElements}. If this happens
then the probability that the last of $\pi$ is $A_0 \setminus X_0 \cup A_1 \setminus X_1$
$\binom{2t}{\abs{A_0 \setminus X_0 \cup A_1 \setminus X_1}}^{-1} > 2^{-2t}$. So
the probability that the first $s$ elements of $\pi$ is $X$ is either $0$
or at least $\left ( 2^{-2(k-1)t} \right )^2 \cdot 2^{-2t} > 2^{-2k^2t}$.

So in each case the probability is either $0$ or at least $2^{-2k^2t}$. By induction
this holds for all positive integers $k$.
\end{proof}

Lemma \ref{lem:UBFirstElements} shows that the weight of any node in the 
transition graph of $\D_{\lg \alpha}$ is either $0$ or
at least
$\exp \left (-O\left ( \frac{n}{\alpha} \cdot \left ( 1 + \log^2 \alpha \right ) \right ) \right )$.
Combining this with Lemma \ref{lem:pebbleDistribution} shows that there exists a
set $X$ of size 
$\exp \left (O\left ( \frac{n}{\alpha} \cdot \left ( 1 + \log^2 \alpha \right ) \right ) \right )$
such that $\U(X)$ is backwards $\alpha$-uniform.

\newpage
\bibliographystyle{plain}
\bibliography{general}

\newpage
\section{Appendix}

\begin{lemma}
	\label{lem:numberTheoryLCM}
	For any integer $n$ we have that:
	\begin{align}
		\label{eq:numberTheoryLCM}
		\lcm_{k \in [n]} \left (
			\binom{n}{k} \cdot k
		\right )
		=
		\lcm(1,2,\ldots,n)
		\, .
	\end{align}
\end{lemma}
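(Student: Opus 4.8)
The plan is to verify the identity one prime at a time. Write $L=\lcm(1,2,\ldots,n)$ and $M=\lcm_{k\in[n]}\bigl(\binom nk k\bigr)$; it then suffices to show $v_p(M)=v_p(L)$ for every prime $p$, where $v_p$ denotes the $p$-adic valuation. Since $v_p(L)=\floor{\log_p n}=:e$ (the largest exponent with $p^e\le n$), the goal becomes to show $\max_{k\in[n]}v_p\bigl(\binom nk k\bigr)=e$.

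For the lower bound I would simply take $k=p^{e}\in[n]$, for which $v_p\bigl(\binom nk k\bigr)\ge v_p(k)=e$. For the upper bound I would use the elementary identity $k\binom nk=n\binom{n-1}{k-1}$, so that $v_p\bigl(\binom nk k\bigr)=v_p(n)+v_p\binom{n-1}{k-1}$; writing $f=v_p(n)$ (note $f\le e$ because $p^f\le n<p^{e+1}$), it remains to show $v_p\binom{n-1}{j}\le e-f$ for all $j\in\{0,1,\ldots,n-1\}$.

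This last inequality is the heart of the argument, and I would prove it with Kummer's theorem: $v_p\binom{n-1}{j}$ equals the number of carries in the base-$p$ addition $j+\bigl((n-1)-j\bigr)=n-1$. Since $n-1<p^{e+1}$, the number $n-1$ occupies base-$p$ digit positions $0,\ldots,e$ only, so a carry can be produced only out of positions $0,\ldots,e-1$. Moreover $p^f\mid n$ forces $n-1\equiv-1\pmod{p^f}$, so the bottom $f$ base-$p$ digits of $n-1$ are all $p-1$; since two base-$p$ digits sum to at most $2p-2<2p-1$, an easy induction on the position shows that no carry can be produced out of positions $0,\ldots,f-1$. Hence carries occur only out of positions $f,\ldots,e-1$, at most $e-f$ of them, giving $v_p\binom{n-1}{j}\le e-f$ and therefore $v_p\bigl(\binom nk k\bigr)\le f+(e-f)=e$ for every $k$. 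Combining with the lower bound yields $v_p(M)=e=v_p(L)$ for all primes $p$, hence $M=L$.

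The only genuinely non-routine step is the carry-counting bound above; the reduction $k\binom nk=n\binom{n-1}{k-1}$ and the lower bound are immediate. As a shortcut one could instead invoke the classical identity $\lcm_{0\le j\le m}\binom mj=\lcm(1,\ldots,m+1)/(m+1)$ with $m=n-1$, which combined with $M=n\cdot\lcm_{0\le j\le n-1}\binom{n-1}{j}$ finishes at once; I would nonetheless include the self-contained Kummer argument to keep the appendix independent.
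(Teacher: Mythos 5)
Your proof is correct, and it shares the paper's overall skeleton -- compare $p$-adic valuations prime by prime, show $\max_{k\in[n]} v_p\bigl(\binom{n}{k}k\bigr)=\floor{\log_p n}$, and witness the lower bound with $k=p^{\floor{\log_p n}}$ -- but the upper bound is established by a genuinely different tool. The paper writes $\binom{n}{k}k=\frac{n!}{(n-k)!\,(k-1)!}$ and uses Legendre's formula, observing that each term $\floor{n/p^i}-\floor{(n-k)/p^i}-\floor{(k-1)/p^i}$ is at most $1$ and that only the $\ell=v_p(\lcm(1,\ldots,n))$ values $i\le\ell$ contribute; it also re-evaluates this sum explicitly at $k=p^\ell$ for the lower bound, whereas you get the lower bound for free from $v_p(k)=e$. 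You instead factor out $n$ via $k\binom{n}{k}=n\binom{n-1}{k-1}$ and bound $v_p\binom{n-1}{j}\le e-f$ (with $f=v_p(n)$) by Kummer's theorem, noting that no carry leaves positions below $f$ because the bottom $f$ digits of $n-1$ are all $p-1$, and none leaves position $e$ or above because $n-1<p^{e+1}$; this carry-counting step is correct, including the induction ruling out low-position carries. What each approach buys: the paper's floor-telescoping argument is self-contained given only Legendre's formula and needs no refinement by $v_p(n)$, since it bounds the valuation of the full quotient at once; your route needs the sharper digit analysis (the crude bound of at most $e$ carries would not suffice after adding $f$), but it isolates the combinatorial content cleanly and makes the lower bound trivial. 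Your remark about the classical identity $\lcm_{0\le j\le m}\binom{m}{j}=\lcm(1,\ldots,m+1)/(m+1)$ together with $\lcm_k\bigl(n\binom{n-1}{k-1}\bigr)=n\cdot\lcm_j\binom{n-1}{j}$ is also a valid shortcut, though it outsources the work to that identity rather than keeping the appendix self-contained, as you note.
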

\begin{proof}
Fix $n$ and let $a = \lcm_{k \in [n]} \left ( \binom{n}{k} k \right )$ and 
$b = \lcm(1,2,\ldots,n)$.
For a prime $p$ and a integer $m$ we let $v_p(m)$ denote the largest integer
$\ell$ such $p^\ell$ divides $m$. It is well-known that $v_p(m!)$ can be
calculated by the following sum:
\begin{align}
	\label{eq:vpForFactorials}
	v_p(m!)
	=
	\sum_{i \ge 1}
		\floor{\frac{m}{p^i}}
	\, .
\end{align}
Using that $\binom{n}{k} = \frac{n!}{(n-k)!(k-1)!}$ and \eqref{eq:vpForFactorials}
we get that
\begin{align}
	\label{eq:expressionForvpa}
	v_p(a) = 
	\max_{k \in [n]} \set {
		\sum_{i \ge 1} \left (
			\floor{\frac{n}{p^i}}
			-
			\floor{\frac{n-k}{p^i}}
			-
			\floor{\frac{k-1}{p^i}}
		\right )
	}
	\, .
\end{align}
We will prove that $v_p(a) = v_p(b)$ for every prime $p$. This clearly implies that $a=b$.
Fix a prime $p$, and let $\ell = v_p(b)$. Then $p^\ell \le n$ and $p^{\ell+1} \ge n$.
Therefore we can write $v_p(a)$ as:
\begin{align}
	\notag
	v_p(a) & = 
	\max_{k \in [n]} \set {
		\sum_{i = 1}^\ell \left (
			\floor{\frac{n}{p^i}}
			-
			\floor{\frac{n-k}{p^i}}
			-
			\floor{\frac{k-1}{p^i}}
		\right )
	}
	\\
	\notag
	\phantom{{}v_p(a){}}
	&
	\le 
	\max_{k \in [n]} \set {
		\sum_{i = 1}^\ell \left (
			\left ( \frac{n}{p^i} + \frac{p^i-1}{p^i} \right )
			-
			\frac{n-k}{p^i}
			-
			\frac{k-1}{p^i}
		\right )
	}
	=
	\ell
	\, .
\end{align}
To prove that $v_p(a) \ge \ell$ we let $k = p^\ell$ in \eqref{eq:expressionForvpa}
to obtain that
\begin{align}
	\notag
	v_p(a) & \ge 
	\sum_{i = 1}^\ell \left (
		\floor{\frac{n}{p^i}}
		-
		\floor{\frac{n-p^\ell}{p^i}}
		-
		\floor{\frac{p^\ell-1}{p^i}}
	\right )
	\\
	\notag
	\phantom{{}v_p(a){}}
	&
	=
	\sum_{i = 1}^\ell \left (
		\floor{\frac{n}{p^i}}
		-
		\left (
			\floor{\frac{n}{p^i}}
			-p^{\ell-i}
		\right )
		-
		\left ( 
			p^{\ell-i}-1
		\right )
	\right )
	=
	\ell
	\, .
\end{align}
Hence $v_p(a) = \ell = v_p(b)$, and since this holds for every prime
$p$ we have $a=b$ and \eqref{eq:numberTheoryLCM} holds as desired.
\end{proof}

\begin{lemma}
	\label{lem:evenSplit}
	Let $V = V_1 \cup V_2 \cup \ldots V_k$ be a partition of a set $V$.
	Let $r$ be an integer and let $\delta_1,\delta_2,\ldots,\delta_r \ge 0$ be non-negative
	real numbers such that $\sum_{i=1}^r \delta_i = 1$. Then there exists
	a function $g : V \to [r]$ such that for all $i \in [r]$ and $j \in [k]$ we
	have
	\begin{align}
		\notag
		\floor{\delta_i \abs{V_j}} \le  
		\abs{g^{-1}(i) \cap V_j} \le 
		\ceil{\delta_i \abs{V_j}}
		\, ,
	\end{align}
	and such that it also holds that
	\begin{align}
		\notag
		\floor{\delta_i \abs{V}} \le  
		\abs{g^{-1}(i) \cap V} \le 
		\ceil{\delta_i \abs{V}}
	\end{align}
\end{lemma}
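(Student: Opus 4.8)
The plan is to translate the statement into a bipartite rounding problem. Given a putative $g$, record the integers $a_{j,i} := \abs{g^{-1}(i) \cap V_j}$ for $j \in [k]$ and $i \in [r]$. The first displayed inequality of the lemma says exactly that $a_{j,i} \in \set{\floor{\delta_i \abs{V_j}}, \ceil{\delta_i \abs{V_j}}}$ for all $j,i$, the second says $\sum_j a_{j,i} \in \set{\floor{\delta_i \abs{V}}, \ceil{\delta_i \abs{V}}}$ for all $i$, and the requirement that $g$ be a well-defined map on $V_j$ is the row-sum equation $\sum_i a_{j,i} = \abs{V_j}$. Conversely, any nonnegative integer matrix $(a_{j,i})$ with those row sums yields an admissible $g$: simply partition each $V_j$ into colour classes of the prescribed sizes. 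So it is enough to produce such a matrix.

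First I would observe that the fractional matrix $a^{\ast}_{j,i} := \delta_i \abs{V_j}$ already meets every constraint: its row sums are $\sum_i \delta_i \abs{V_j} = \abs{V_j}$; its column sums equal $\delta_i \abs{V}$, which lies in $[\floor{\delta_i \abs{V}}, \ceil{\delta_i \abs{V}}]$; and each entry trivially lies between $\floor{\delta_i \abs{V_j}}$ and $\ceil{\delta_i \abs{V_j}}$. Hence the polytope $P$ of real matrices satisfying (a) the row-sum equalities, (b) the per-entry box bounds, and (c) the two-sided column-sum bounds is nonempty and bounded.

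It then remains to round $a^{\ast}$ to an integer point of $P$. The slick route is to note that $P$ is defined by a totally unimodular system with integral data: the coefficient matrix of the linear forms occurring in (a) and (c) is the incidence matrix of the complete bipartite graph on the $k$ rows and the $r$ columns — each variable $a_{j,i}$ appears in one row constraint and in one column constraint — which is totally unimodular, and adjoining the box constraints of (b), i.e. unit-vector rows, preserves total unimodularity. Therefore every vertex of $P$ is integral; since $P$ is nonempty and bounded it has a vertex, and that integral matrix is the $(a_{j,i})$ we want. A more hands-on alternative is to round iteratively: as long as some entry is fractional, the integer row sums force every row containing a fractional entry to contain at least two of them, so the bipartite graph on the fractional entries contains a cycle (or a path between two columns whose column sums are themselves fractional, hence strictly interior to their bounds); pushing mass alternately $+\eps$ and $-\eps$ along such a cycle or path keeps all row sums fixed while moving column sums and entries inside their slack, and taking $\eps$ maximal forces at least one entry to become integral, strictly decreasing the number of fractional entries.

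The step I expect to be the real obstacle is precisely this elementary rounding: cycles are painless, since an alternating $\pm\eps$ push around a cycle leaves every row sum and column sum unchanged, but when the fractional-support graph offers only a path one must pick the orientation of the push according to which side of its bound each endpoint column sum currently sits on, check that the box bounds and the column-sum bounds leave room in at least one direction, and argue termination. The totally unimodular formulation is the cleanest way to avoid this bookkeeping, so that is the version I would write up.
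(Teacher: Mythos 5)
Your proof is correct, and it takes a somewhat different route from the paper's. The paper also reduces the lemma to a bipartite integrality statement, but it does so by an explicit max-flow construction on the \emph{fractional residuals}: it builds a network with source capacities $\alpha_j = \abs{V_j} - \sum_i \floor{\delta_i\abs{V_j}}$, sink capacities $\beta_i = \ceil{\delta_i\abs{V}} - \sum_j \floor{\delta_i\abs{V_j}}$, and unit-capacity middle edges for the pairs $(V_j,i)$ with $\delta_i\abs{V_j}$ fractional; it exhibits the fractional flow $f(V_j,i)=\delta_i\abs{V_j}-\floor{\delta_i\abs{V_j}}$ to show the max flow saturates the source, invokes integrality of max flows with integer capacities, and sets $\abs{V_j^i}=\floor{\delta_i\abs{V_j}}+f_I(V_j,i)$. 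You instead work directly with the transportation-type polytope of matrices $(a_{j,i})$ with exact row sums $\abs{V_j}$, per-entry floor/ceiling boxes, and two-sided column-sum bounds, note that $a^{\ast}_{j,i}=\delta_i\abs{V_j}$ is a feasible point, and conclude integrality of a vertex from total unimodularity of the bipartite incidence matrix with integral data (your fallback cycle/path rounding is the same argument made by hand). The two arguments are close cousins --- flow integrality is itself a TU phenomenon --- but yours has the advantage of encoding both the lower and the upper column-sum bound explicitly as constraints, so the bound $\abs{g^{-1}(i)\cap V}\ge\floor{\delta_i\abs{V}}$ comes for free, whereas in the paper's network formulation only the upper bound $\beta_i$ appears as a capacity and the lower bound is left implicit (it follows because the integral flow saturates the source edges, so the flow missed by column $i$ is absorbed by the slack $\sum_{i'}\beta_{i'}-\alpha$, but the paper does not spell this out); the paper's construction, on the other hand, is elementary and immediately algorithmic via any integral max-flow routine.
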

\begin{proof}
Let $G$ be a with the nodes $U = \set{s,t} \cup \set{V_1,\ldots,V_k} \cup [r]$, where
$s$ and $t$ are a source and a sink, respectively. For each $j \in [k]$ let
$\alpha_j = \abs{V_j} - \sum_{i \in [r]} \floor{\delta_i \abs{V_j}}$, and let
$\beta_i = \ceil{\abs{V}\delta_i} - \sum_{j \in [k]} \floor{\delta_i \abs{V_j}}$.
The graph has an edge from $s$ to $V_j$ with capacity $\alpha_j$ for every $j \in [k]$,
and an edge from $i$ to $t$ with capacity $\beta_i$ for every $i \in [r]$.
For each pair $(V_j,i)$ such that $\delta_i \abs{V_j}$ is not an integer we add
an edge from $V_j$ to $i$ with capacity $1$.

We will now show that the maximum $st$-flow in $G$ is $\alpha = \sum_{j \in [k]} \alpha_j$. Since
the cut $(s, U \setminus \set{s})$ has value $\alpha$, the flow is at most $\alpha$. On
the other hand let $f$ be a flow function defined as follows. For any $j \in [k]$
$f(s,V_j) = \alpha_j$. For any $j \in [k], i \in [r]$
$f(V_j, i) = \delta_i \abs{V_j} - \floor{\delta_i \abs{V_j}}$. For any $i \in [r]$
$f(i, t) = \abs{V}\delta_i - \sum_{j \in [k]} \floor{\delta_i \abs{V_j}}$.
It is easy to verify that $f$ is a flow with value $\alpha$.

Since all the capacities of $G$ are integers, there exists a maximum $st$-flow $f_I$
where the flow along each edge is an integer. We now construct $g$ from $f_I$.
For each $j \in [k]$ we partition $V_j$ into $V_j = V_j^1 \cup V_j^2 \cup \ldots \cup V_j^r$
such that $\abs{V_j^i} = \floor{\delta_i \abs{V_j}} + f_I(V_j,i)$ for every $i \in [r]$.
This is clearly possible. Now we define $g$ such that $g(v) = i$ for every $v \in \bigcup_{j \in [k]} V_j^i$,
and it is easy to see that $g$.
\end{proof}

\end{document}